\documentclass[11pt]{article}

\usepackage{amsmath}
\usepackage{amsthm}
\usepackage{amssymb}
\usepackage{graphicx}
\usepackage{bm}
\usepackage[margin=1.1in]{geometry}
\usepackage{hyperref}

\newcommand{\rmd}{\mathrm{d}}
\newcommand{\Tr}{\mathrm{Tr}}

\theoremstyle{plain}
\newtheorem{thm}{Theorem}[section]
\newtheorem{prop}[thm]{Proposition}
\newtheorem{lem}[thm]{Lemma}
\theoremstyle{remark}
\newtheorem{rem}[thm]{Remark}

\title{Lattice Green's function of the hyperkagome lattice: modular uniformization at level~30
from an orthogonal differential Galois group}

\author{%
Bryan Nasr\thanks{Independent researcher.\ \texttt{bryannasr4@gmail.com}}
\and
Jean-Marie Maillard\thanks{LPTMC, UMR 7600 CNRS, Sorbonne Universit\'e, Tour 23, 5\`eme
\'etage, case 121, 4 Place Jussieu, 75252 Paris Cedex 05, France.\
\texttt{jean-marie.maillard@sorbonne-universite.fr}}%
}

\date{}

\begin{document}

\maketitle

\begin{abstract}
The lattice Green's functions of the cubic lattices are classical: each is a symmetric square of a
second-order operator, with a closed form in complete elliptic integrals. The hyperkagome lattice,
realized by the iridium sublattice of the spin-liquid candidate
Na\textsubscript{4}Ir\textsubscript{3}O\textsubscript{8}, has resisted such
a treatment: Varma and Monien reduced its density of states to a threefold integral they found no way
to solve exactly. From exact lattice moments we obtain an irreducible third-order linear differential
(Picard--Fuchs) operator annihilating the Green's function. It is not a literal symmetric square, yet its symmetric
square carries a rational solution: the differential Galois group is orthogonal, and an order-two
differential intertwiner makes the operator projectively a symmetric square. Our main result is that the
underlying second-order operator is the uniformizing equation of the genus-zero modular curve of
$\Gamma_0(30)^+$, the level-thirty group generated by $\Gamma_0(30)$ and all its Atkin--Lehner
involutions, an explicit eta quotient parametrizing the natural variable. That uniformizing
equation is itself not new---it is a row of the Lian--Yau table of genus-zero groups---so what is
new here is the identification of a lattice Green's function with it, together with a proof: the
Schwarzian identity is established by an a priori pole-degree bound rather than checked
numerically, which also proves the tabulated entry. The hyperkagome Green's function is therefore
modular at level thirty, supplying the closed form Varma and Monien sought. Its weight-two period
is obtained explicitly: a depth-one quasimodular form which we prove is not a modular form times
an algebraic function at any weight.
\end{abstract}

\noindent\textbf{2020 Mathematics Subject Classification.}
34M15, 11F03 (Primary); 82B20, 12H05, 11F11, 11F20 (Secondary).

\medskip
\noindent\textbf{Keywords.}
Lattice Green's function; hyperkagome lattice; Picard--Fuchs operator; differential Galois
theory; symmetric square; modular curve; Fricke group; quasimodular form.

\section{Introduction}
The lattice Green's function (LGF)
\begin{equation}
G(z)=\int_{\mathrm{BZ}}\frac{\rmd^3k}{(2\pi)^3}\,\frac{1}{z-\varepsilon(\bm k)}
=\int\frac{\rho(E)\,\rmd E}{z-E},
\label{eq:lgf}
\end{equation}
the resolvent of a tight-binding band $\varepsilon(\bm k)$ averaged over the Brillouin zone~\cite{AshcroftMermin}, is a central object
of lattice statistics: it controls random-walk return probabilities, defect and impurity states, and the local
density of states $\rho(E)=-\pi^{-1}\,\mathrm{Im}\,G(E+i0^+)$~\cite{Economou,Guttmann2010}. For the
three-dimensional cubic lattices the LGF is a celebrated closed form. Watson's triple
integral for the simple-cubic (SC) lattice~\cite{Watson1939}, and the body- and face-centered analogues, are all
expressible through products of complete elliptic integrals $K$~\cite{Joyce1973,GlasserZucker1977,Joyce1998}; structurally,
each of these order-three linear differential operators is the \emph{symmetric square} of a second-order
(elliptic) one~\cite{BoukraaHassaniMaillard2014}, which is precisely why an elliptic closed form exists.

Not every lattice is so fortunate. As the local geometry becomes more intricate---decorated, frustrated, or
multi-band lattices---the LGF can fail to reduce to elliptic integrals and instead becomes a period of a
higher-order or irreducible linear differential operator (the diamond lattice already signals this
trend~\cite{Guttmann2010,Joyce1998}). The natural three-dimensional frustrated network is the
\emph{hyperkagome} lattice: a chiral arrangement of corner-sharing triangles, obtained by removing one quarter
of the sites of the pyrochlore lattice in an ordered (space group $P4_132$) fashion. It is realized by the Ir$^{4+}$
moments of Na\textsubscript{4}Ir\textsubscript{3}O\textsubscript{8}, a leading three-dimensional quantum-spin-liquid candidate~\cite{Okamoto2007,
Hopkinson2007,Lawler2008}, which has made its single-particle spectral functions of independent interest.

Varma and Monien~\cite{VarmaMonien2013} computed the hyperkagome density of states and reduced it to a single
threefold integral of a rational function of $\cos k_x,\cos k_y,\cos k_z$, remarking that ``it seems highly
suggestive that a closed form expression can be obtained if a particular complicated yet highly symmetrical
integral can be solved,'' but that ``we have currently found no way to exactly solve'' it. That integral has
remained open.

We settle its analytic structure and prove its modularity. The hyperkagome LGF is governed by an \emph{irreducible third-order}
linear differential (Picard--Fuchs) operator $M$ that is not a \emph{literal} symmetric square, yet whose symmetric square carries a
rational solution: the differential Galois group is \emph{orthogonal}, $G_M=\mathrm{O}(3,\mathbb{C})$ with
$G_M^\circ=\mathrm{SO}(3,\mathbb{C})\cong\mathrm{PSL}(2,\mathbb{C})$, so $M$ \emph{is} projectively equivalent to the
symmetric square of a second-order operator. The obstruction Varma and Monien met is therefore not a genuine
third-order (``$\mathrm{SL}_3$-generic'') object with no elliptic reduction; it is a \emph{hidden orthogonal /
modular structure}, realized through a nontrivial operator equivalence rather than the function-multiplier and
pullback equivalences one usually tests. We obtain the operator explicitly, certify this structure by exact
computation, and thereby explain \emph{why} a naive symmetric-square reduction was never found. We then prove the
closed-form structure outright: the second-order operator underlying $M$ is the uniformizing operator of the
genus-zero modular curve $X(\Gamma_0(30)^+)$, the natural variable $t$ generates its function field and is given
by an explicit level-30 eta-quotient parametrization, and the hyperkagome LGF is modular at level
$30=2\cdot3\cdot5$---to our knowledge the first
three-dimensional LGF known to require such an equivalence, and the first lattice Green's function realized at a
modular level with three distinct prime factors. The uniformizing operator itself is not new: it is tabulated,
as the row $\Gamma_0(30)^+$ of a list of genus-zero groups, in~\cite{LianYau1995}. What is new here is the
identification of a lattice Green's function with that operator, and the proof: Theorem~\ref{thm:modular}
establishes the Schwarzian identity by an a priori pole-degree bound rather than by a series match, and thereby
also proves the tabulated entry.

\section{Hyperkagome lattice and its Green's function}
\label{sec:lattice}
The hyperkagome lattice has twelve sites per cubic cell, each with coordination number four; its nearest-neighbor
tight-binding Bloch Hamiltonian $H(\bm k)$ is a $12\times12$ Hermitian matrix. We fix the hopping amplitude to
unity (reserving $t$ for the spectral variable introduced below) and the energy origin at the band center.
Diagonalizing $H(\bm k)$ over the Brillouin zone yields the spectrum shown in
Figure~\ref{fig:dos}. Three exact features organize everything that follows.

\emph{(i) A flat band, and the line-graph structure.} The hyperkagome lattice is the line graph of a three-regular
``premedial'' net $\Gamma$ (eight vertices per cell, twelve edges): its sites are the edges of $\Gamma$ and its
triangles are the vertices of $\Gamma$. For the line graph $L(\Gamma)$ of a $d$-regular graph one has
$A(L(\Gamma))=B^{\!\top}B-2I$ with $B$ the incidence matrix, so the spectrum is $\{\mu_i+d-2\}$ together with the
eigenvalue $-2$ of multiplicity $(\text{edges})-(\text{vertices})$. Here $d=3$, giving four flat bands at $E=-2$
(spectral weight exactly $1/3$, i.e.\ $\tfrac{12-8}{12}$); this contributes a pole $\tfrac13(z+2)^{-1}$ to $G(z)$
and is the flat-band pole written $1/(t_{\rm VM}{+}1)$ in \cite{VarmaMonien2013}, whose energy variable is
$t_{\rm VM}=E/2$ (the correspondence, including this pole and its weight, is reproduced in their variable by
\texttt{numerics/vm\_crosscheck.py} in the repository). The eight dispersive bands are $E=\mu_i+1$ and fill $[-2,4]$, since the adjacency
eigenvalues $\mu_i$ of a three-regular graph lie in $[-3,3]$.

\emph{(ii) An exact reflection symmetry, from bipartiteness.} A graph has an adjacency spectrum symmetric about
$0$ if and only if it is bipartite; we verify that the premedial net $\Gamma$ is bipartite (it admits a
translation-consistent two-coloring, checked on finite blocks up to $5^3$ cells). Through the line-graph shift $E=\mu+1$ this makes the dispersive density of states \emph{exactly}
symmetric about $E=1$; independently, all odd central moments
$\int\rho_{\rm disp}(E)(E-1)^{2m+1}\rmd E$ vanish in exact arithmetic for every odd order up to
$229$, the highest the tabulated moments support. Hence the dispersive part of
$G$ is odd in $\zeta\equiv z-1$, and is a function of $t\equiv\zeta^{-2}$.

\emph{(iii) Integer moments.} The per-site spectral moments $m_n=\frac{1}{12}\Tr H^n$ are exact integers,
$m_0,\dots,m_{10}=1,0,4,4,28,60,260,756,2828,9292,33384$; we computed them to $m_{230}$ by closed-walk enumeration.
Among the ordered quarter-depletions of the pyrochlore lattice, the chiral hyperkagome enantiomorphs are the ones
reproducing this sequence (hexagon-containing depletions already differ at $m_6=264\neq260$); this selection is
checked separately and is not needed for what follows.

\emph{(iv) From the raw moments to $t$: a pullback of degree two.} Features \emph{(i)} and \emph{(ii)} are what
carry the raw moments into the variable used from Section~\ref{sec:pf} onward, and the passage is worth writing
out, because it is a pullback of degree two and not a change of variable. Set $x=1/z$ and let
$S(x)=\sum_{n\ge0}m_nx^n=z\,G(z)$ be the generating function of the moments themselves. Then
\begin{equation}
S(x)=\frac{1/3}{1+2x}+\frac{1}{1-x}\,\Phi\!\left(\frac{x^{2}}{(1-x)^{2}}\right),
\qquad \Phi(t)=\sum_{m\ge0}\nu_m t^m,
\label{eq:momentbridge}
\end{equation}
an identity verified coefficient by coefficient in exact rational arithmetic for $x^{0}$ through $x^{230}$, the
full extent of the tabulated moments (\texttt{numerics/verify\_moment\_bridge.py}). The first term is the
flat-band pole of \emph{(i)}; the factor $(1-x)^{-1}=z/\zeta$ is the gauge implied by $\zeta G_{\rm disp}$ being
even in $\zeta$; and
\begin{equation}
t=\frac{x^{2}}{(1-x)^{2}}=\frac{1}{(z-1)^{2}}
\label{eq:tofx}
\end{equation}
is the degree-two invariant of the involution $x\mapsto x/(2x-1)$, which is the reflection $E\mapsto2-E$ of
\emph{(ii)} read in $x$. The passage to $t$ is therefore a \emph{quadratic pullback}, not a substitution, and
every singular point of the operator $M$ of Section~\ref{sec:pf} accordingly has two preimages in the energy
variable, one on either side of $E=1$:
{\small\setlength{\arraycolsep}{3pt}%
\begin{equation}
\begin{array}{c|cccccc}
\text{locus in }t & 9t-1 & 5t-1 & 4t-1 & t-1 & t=0 & t=\infty\\\hline
\text{preimage in }x & (4x-1)(2x+1) & 4x^{2}+2x-1 & (3x-1)(x+1) & 2x-1,\ x=\infty & x=0 & x=1\\
\text{energies} & E=4,\,-2 & E=1\pm\sqrt5 & E=3,\,-1 & E=2,\,0 & E=\infty & E=1
\end{array}
\label{eq:branches}
\end{equation}}%
the last two being double points, and the degree-seven locus pulling back to
$-(1-x)^{14}p_7\!\left(x^{2}/(1-x)^{2}\right)$. The pairing is not the one the constants suggest: it is $9t-1$,
and not $4t-1$, that carries the factor $4x-1$. One consequence is worth recording for a reader who starts, as
is natural, from the integer moments themselves. A minimal annihilator guessed directly from the $m_n$ in the
variable $x$ is not the $L_4$ of Section~\ref{sec:pf} but an operator of order \emph{five}: the least common
left multiple of an order-one operator carrying the flat-band pole with the order-four pullback of $L_4$, which
in turn factors as $\mathcal{N}_3\cdot\left(\rmd/\rmd x+1/(x-1)\right)$, where $\mathcal{N}_3$ is $M$ pulled
back along \eqref{eq:tofx} and regauged. That order-one right factor is the trivial factor $\rmd/\rmd t$ of
$L_4$ wearing the gauge of \eqref{eq:momentbridge}, and its solution $1/(1-x)$ is the constant solution in that
gauge---not an independent spectral feature at $E=1$. Note finally that pulling back a product is not the same
as multiplying the pullbacks: the Jacobian of \eqref{eq:tofx} intervenes, exactly as it does for $Q_V$ under the
Schwarzian in Section~\ref{sec:modular}.

\emph{(v) The energy variable is itself modular.} Section~\ref{sec:modular} will show that $t$ generates the
genus-zero function field of $X(\Gamma_0(30)^+)$, with the cusp at $t=0$ and $t=\infty$ one of the five order-two
points. Since \eqref{eq:tofx} is branched over exactly those two points, $h=x/(1-x)=1/(E-1)=\sqrt{t}$ generates
the function field of $X(G)$ for a subgroup $G\subset\Gamma_0(30)^+$ of index two, of signature $(0;2^{8};1)$
(genus zero, eight order-two points, one cusp, covolume $6\pi$), and $t=h^{2}$. Which subgroup is decided by the
$q$-expansion: $t=q+O(q^{2})$ exactly, so $h=q^{1/2}(1-2q+4q^{2}-9q^{3}+\cdots)$ and $h(\tau+1)=-h(\tau)$. Hence
$G$ does not contain $\Gamma_0(30)$ and is none of the seven groups $\Gamma_0(30)+W$, and the reflection
$E\mapsto2-E$ is \emph{not} an Atkin--Lehner involution: it is the translation $\tau\mapsto\tau+1$ acting on
$X(G)$. Explicitly, $G\cap\Gamma_0(30)=\{\gamma:\ b+c/30\equiv0\pmod 2\}$, a congruence subgroup of level $60$
(genus $9$, eight cusps, every cusp width of $\Gamma_0(30)$ doubled), and
$h=\kappa\,F/H$ with $\kappa=(\eta(\tau)/\eta(2\tau))^{12}$ and explicit $F,H\in M_4(\Gamma_0(30))$, an identity
certified in $M_{32}(\Gamma_0(30))$ past its valence bound $192$
(\texttt{numerics/certify\_energy\_hauptmodul.py}). The level-$30$ statement of Section~\ref{sec:modular} is thus
a statement about $t$; the statement about $E$ itself is one at level $60$.

\begin{figure}[t]
\centering
\includegraphics[width=0.8\textwidth]{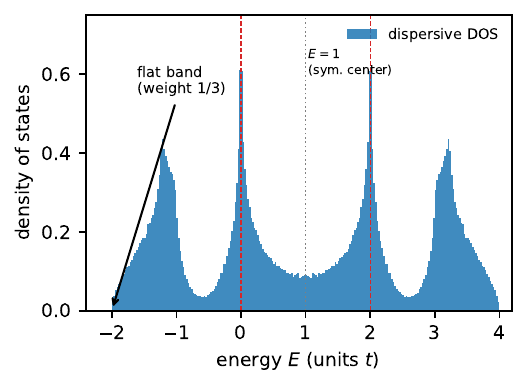}
\caption{Density of states of the hyperkagome tight-binding model at unit hopping. Four flat bands at $E=-2$ carry weight
$1/3$; the eight dispersive bands fill $[-2,4]$ and are \emph{exactly} symmetric about $E=1$, with principal van
Hove singularities at $E=0,2$ (dashed) and weaker interior features at $E=-1,3$ and $E=1\pm\sqrt5$ (see text).}
\label{fig:dos}
\end{figure}

\section{The Picard--Fuchs operator}
\label{sec:pf}
Because $G(z)$ is the diagonal of a rational function over the torus, it is holonomic (D-finite): it satisfies a
linear differential equation with polynomial coefficients~\cite{Lipshitz1988}. We determine that equation from
the moments. Throughout, \emph{operator} abbreviates \emph{linear differential operator} in $\rmd/\rmd t$ with
coefficients in $\mathbb{Q}(t)$, and all products, factorizations, adjoints and symmetric squares are taken in
the ring $\mathbb{Q}(t)\langle\rmd/\rmd t\rangle$ of such operators. Removing the flat band, $d_n=m_n-\tfrac13(-2)^n$, and forming the even central moments
$\nu_m=\int\rho_{\rm disp}(E)\,(E-1)^{2m}\rmd E$, the generating function $\Phi(t)=\sum_m\nu_m t^m$ (with
$t=\zeta^{-2}$) is the essential object. Guessing a minimal annihilator from the exact $\nu$ sequence by the
method of differential approximants~\cite{Guttmann2010,orealgebra} returns a linear differential operator $L_4$
of order four and degree fifteen.

This $L_4$ is reducible in a trivial way: it admits the constant solution, so $L_4=M\,\rmd/\rmd t$ exactly, with
$M$ of order \emph{three} and degree fifteen,
\begin{equation}
M=\sum_{i=0}^{3}c_i(t)\,\Big(\frac{\rmd}{\rmd t}\Big)^{i},\quad \deg c_i\le 15,
\label{eq:M}
\end{equation}
the minimal annihilator of $\Phi'$---minimal in the certified, guess-and-verify sense
made precise under \emph{Certification} below. That factorization is moreover the \emph{only} one:
$L_4$ is not a least common left multiple $\mathrm{LCLM}(N,\rmd/\rmd t)$ of $\rmd/\rmd t$ with a
second operator of order three, so the constant solution is not a direct summand and $M$---rather
than some other order-three operator---is the intrinsic object here
(Proposition~\ref{prop:unique} below). Its leading coefficient factorizes as
\begin{equation}
c_3(t)\propto t^2\,(t-\tfrac19)(t-\tfrac15)(t-\tfrac14)(t-1)^3\,p_7(t),
\label{eq:lc}
\end{equation}
with $p_7$ an irreducible degree-seven polynomial. The finite singular points are images of the band structure
under $t=(E-1)^{-2}$: the band edge $E=4$ (equivalently $E=-2$) maps to $t=\tfrac19$, the principal van Hove
energies $E=0,2$ to $t=1$, and the interior van Hove features $E=-1,3$ and $E=1\pm\sqrt5$ to $t=\tfrac14$ and
$t=\tfrac15$; the half-integer exponent in \eqref{eq:riemann} confirms a square-root singularity at each. The
remaining singular points ($t=0$, $t=\infty$, and the degree-seven locus $p_7$) all appear in the complete Riemann
scheme \eqref{eq:riemann} below.

\emph{Certification.} The operator $M$ was reconstructed exactly over $\mathbb{Q}$ and shown to annihilate the
series $\Phi'(t)$ \emph{exactly over $\mathbb{Q}$} through order $t^{111}$---$112$ independent relations against
the $57$ free coefficients of $M$ (one of the $58$ stored coefficients fixes the overall scale), an
overdetermination margin of $55$ with zero residuals. It was
furthermore re-derived from a second, independent implementation by the authors---an independent
quarter-depletion construction, an independent closed-walk moment enumeration, and an independent
guess---and the moment sequence and all four coefficient polynomials of $M$ came out bit-identical.
As a further independent check, the
integer moments were reproduced by direct diagonalization of the $12\times12$ Bloch Hamiltonian on a
Brillouin-zone grid---a method disjoint from the closed-walk enumeration. This is the standard
``conjectured-then-verified'' certification of the lattice-statistics literature~\cite{Guttmann2010}. An
unconditional derivation by creative telescoping~\cite{Koutschan2013,Bostan2013} from the rational integrand of
the resolvent---which would upgrade this verification to a formal proof---is a natural next step, well suited to
a dedicated holonomic-functions implementation.

\section{Riemann scheme and differential-Galois structure}
The analytic nature of the LGF is decided by the local structure of $M$. Computing the Frobenius indicial
exponents at each regular singular point gives the Riemann scheme
{\small\setlength{\arraycolsep}{3pt}%
\begin{equation}
\begin{array}{c|ccccccc}
 & t=0 & t=\tfrac19 & t=\tfrac15 & t=\tfrac14 & t=1 & p_7 & t=\infty \\\hline
\text{exp.} & \{-1,-1,0\} & \{-\tfrac12,0,1\} & \{-\tfrac12,0,1\} & \{-\tfrac12,0,1\}
 & \{-1,-\tfrac12,\tfrac12\} & \{0,1,3\} & \{\tfrac32,2,3\}
\end{array}
\label{eq:riemann}
\end{equation}}
Here $p_7$ denotes the irreducible degree-seven locus, whose seven Galois-conjugate roots all carry the same
integer exponents $\{0,1,3\}$, computed exactly by reduction modulo $p_7$. These are \emph{apparent}
singularities: integer exponents alone do not establish this, but all three local solutions at each root are
log-free---indeed analytic. Log-freeness is verified directly through the Frobenius obstruction at each root of
$p_7$, computed modulo several independent primes; independently of that check, analyticity follows
unconditionally from the intertwiner of Section~\ref{sec:y0closed}: the exact operator identity of step [IV] of
Theorem~\ref{thm:y0}, together with the injectivity computation \eqref{eq:Einj}, carries a full solution basis of
the operator $\widetilde N$---which is regular on the $p_7$ locus, as are the twist unit $v$ and the first-order
operator $T$ of \eqref{eq:y0rhos}---onto a basis of solutions of $M$, so every local solution of $M$ at each root
of $p_7$ is analytic. We use the standard
Riemann-scheme convention at infinity, in which an exponent $\rho$ at $t=\infty$ corresponds to a local solution
$y\sim t^{-\rho}$; the three solutions there thus behave as $t^{-3/2}$, $t^{-2}$ and $t^{-3}$. As a check on
\eqref{eq:riemann}, the operator has $s=13$ singular points---the five rational points above, the seven roots of
$p_7$, and $t=\infty$; the relation counts the apparent singularities on $p_7$ along with the rest---and the
exponents sum to
\begin{equation}
-2 + 3\cdot\tfrac12 - 1 + 7\cdot 4 + \tfrac{13}{2} \;=\; 33 \;=\; \tfrac{n(n-1)}{2}\,(s-2),\qquad n=3,
\end{equation}
so the Fuchs relation is satisfied exactly.

The repeated exponent $-1$ at $t=0$ is the source of the logarithmic solution used
below. The half-integer exponent $-\tfrac12$ at the band edge $t=\tfrac19$ is the square-root van Hove singularity
of a three-dimensional band edge, $\rho(E)\sim\sqrt{E_{\rm edge}-E}$.

We first record two local-to-global facts about $M$---it is irreducible, and it is not a \emph{literal} symmetric
square---and then determine its differential Galois group, which is the object that actually decides the closed-form
question.

\emph{$M$ is irreducible over $\overline{\mathbb{Q}}(t)$.} $M$ is Fuchsian---regular singular at every point,
including $t=\infty$ (exponents $\{\tfrac32,2,3\}$)---so any first-order right factor $\mathrm{d}/\mathrm{d}t-h$
has $h=\sum_s e_s/(t-s)$ with $e_s$ a local exponent at each singular point $s$ and \emph{no} polynomial part. At
the rational singular points the $e_s$ are the exponents of \eqref{eq:riemann}; at the irreducible degree-seven
locus $p_7$ the common (Galois-invariant) residue ranges over its exponents $\{0,1,3\}$. This candidate set is
finite, and each candidate is tested exactly over $\mathbb{Q}$ by the
divisibility criterion $c_3(h''+3hh'+h^3)+c_2(h'+h^2)+c_1h+c_0=0$. No first-order right factor of $M$ exists over
$\mathbb{Q}(t)$, and none of $\mathrm{adjoint}(M)$---which for a third-order operator excludes an order-two right
factor of $M$ as well---so $M$ has no right factor of order one or two over $\mathbb{Q}(t)$. We now descend to
$\overline{\mathbb{Q}}(t)$ in two steps. \emph{First}, $M$ has no order-one right factor there: a first-order right factor
$\mathrm{d}/\mathrm{d}t-h$ with $h\in\overline{\mathbb{Q}}(t)$ has a finite Galois orbit
$\{\mathrm{d}/\mathrm{d}t-h^\sigma\}$ of right factors of $M$, whose least common left multiple is a right factor
\emph{defined over $\mathbb{Q}(t)$} of order $k\in\{1,2,3\}$; $k=1$ gives $h\in\mathbb{Q}(t)$ and $k=2$ an
order-two right factor over $\mathbb{Q}$, both excluded above, while $k=3$ would make $M$ the LCLM of order-one
operators, hence completely reducible, forcing \emph{semisimple} local monodromy and no logarithm---contradicting
the genuine logarithm at $t=0$ (repeated indicial exponent $-1$, exactly one log-free local solution of three,
verified directly). \emph{Second}, $M$ has no order-two right factor over $\overline{\mathbb{Q}}(t)$: such a factor
is equivalent to an order-one right factor of $\mathrm{adjoint}(M)$ over $\overline{\mathbb{Q}}(t)$, excluded by the
identical descent applied to $\mathrm{adjoint}(M)$ (no order-one factor over $\mathbb{Q}$ by the adjoint search, no
order-two by the $M$ search, and $\mathrm{adjoint}(M)$ is not completely reducible, since complete reducibility is
self-dual and $M$ is not). Hence $M$ is irreducible over $\overline{\mathbb{Q}}(t)$.
(This exact enumeration replaces the floating-point candidate search of earlier drafts; the computation is carried
out by \texttt{numerics/certify\_factor.py} in the repository and is independently reproducible with a certified
factorizer such as Maple's \texttt{DFactor} or Magma.)

Irreducibility settles a second question, raised already in Section~\ref{sec:pf}: an order-four operator
with a constant solution admits the factorization $L_4=M\,\rmd/\rmd t$, but it may instead be the
\emph{direct sum} $\mathrm{LCLM}(N,\rmd/\rmd t)=N\oplus\rmd/\rmd t$, in which case $N$---not $M$---is the
operator to study, and every computation below would have to be redone for it. That does not happen here.

\begin{prop}\label{prop:unique}
$L_4=M\,\rmd/\rmd t$ is the unique factorization of $L_4$ into linear differential operators of positive
order. In particular $L_4\neq\mathrm{LCLM}(N,\rmd/\rmd t)$ for every order-three operator $N$: the constant
solution is not a direct summand of $\mathrm{Sol}(L_4)$.
\end{prop}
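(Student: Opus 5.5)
The plan is to work with solution spaces and the monodromy action, using the irreducibility of $M$ over $\overline{\mathbb{Q}}(t)$ established just above. Take a point $t_0$ where $L_4$ is nonsingular, and let $V=\mathrm{Sol}(L_4)$ be its four-dimensional space of local solutions there. Since $L_4=M\,\rmd/\rmd t$, differentiation gives an exact sequence
\[
0\to\mathbb{C}\to V\xrightarrow{\ \rmd/\rmd t\ }\mathrm{Sol}(M)\to0 .
\]
This sequence is equivariant for the differential Galois group, and in particular for monodromy. In any factorization $L_4=AB$ with both orders positive, $W=\mathrm{Sol}(B)$ is a Galois-stable subspace of $V$ of dimension $\operatorname{ord}B\in\{1,2,3\}$. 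Conversely, $W$ determines $B$ up to a left unit, and then $A$ is determined by right division. So it suffices to classify the Galois-stable subspaces of $V$ of dimension $1$, $2$ or $3$.

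First I classify these subspaces by their image. The image $W'=W'\!\big(W\big)$ of $W$ under differentiation is a Galois-stable subspace of $\mathrm{Sol}(M)$, hence $0$ or all of $\mathrm{Sol}(M)$ by irreducibility.
\begin{itemize}
\item If the image is $0$, then $W\subset\mathbb{C}$. This forces $W=\mathbb{C}$, hence $B=\rmd/\rmd t$ and $A=M$.
\item If the image is all of $\mathrm{Sol}(M)$, then $\dim W\ge3$, so $\dim W=3$ and $W\cap\mathbb{C}=0$. Then $V=\mathbb{C}\oplus W$ is a direct sum of Galois-stable subspaces, which is exactly the statement $L_4=\mathrm{LCLM}(\rmd/\rmd t,B)$.
\end{itemize}
The case $\dim W=2$ is thereby excluded automatically. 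Both claims of the proposition therefore reduce to showing that the constant solution has no Galois-stable complement in $V$.

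To rule out a complement I would use local monodromy at $t=0$, where the Riemann scheme~\eqref{eq:riemann} gives exponents $\{-1,-1,0\}$ with a genuine logarithm. Concretely, $\mathrm{Sol}(M)$ near $0$ has a basis
\[
f_1=t^{-1}g(t),\qquad f_2=f_1\log t+t^{-1}h(t),\qquad f_3 \text{ analytic},
\]
with $g(0)=a\neq0$. This is the unique log-free solution with exponent $-1$, whose leading coefficient is nonzero by definition.

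Let $T$ be the local monodromy. On $\mathrm{Sol}(M)$ it is unipotent with Jordan blocks of sizes $2$ and $1$, so $(T-1)^2=0$ there. Now take antiderivatives:
\begin{itemize}
\item $F_1=\int f_1=a\log t+(\text{meromorphic})$, so $(T-1)F_1=2\pi i\,a\neq0$, a nonzero constant.
\item $F_2=\int f_2=\tfrac a2\log^2t+\cdots$, so $(T-1)F_2=2\pi i F_1+c$ for some constant $c$.
\end{itemize}
Hence $(T-1)^2F_2=(2\pi i)^2a\neq0$.

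Suppose a $T$-stable complement $W$ existed. Differentiation identifies $W$ $T$-equivariantly with $\mathrm{Sol}(M)$, so $(T-1)^2=0$ on $W$. Also $T$ acts trivially on $\mathbb{C}$, so $(T-1)^2$ would vanish on all of $V=\mathbb{C}\oplus W$. This contradicts $(T-1)^2F_2\neq0$. So no splitting exists, and $L_4=M\,\rmd/\rmd t$ is the only factorization.

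The main point to secure is the local structure at $t=0$. The argument needs the log-free solution with exponent $-1$ to have residue $a\neq0$, which holds because its leading coefficient is nonzero. It also needs the logarithm to sit in a block with that solution rather than the exponent-$0$ one; this is the ``exactly one log-free local solution'' fact already verified. I would double-check that the meromorphic parts produce no further logarithms that could cancel $a\log t$. This is impossible, because $\int t^{-1}g$ has logarithmic coefficient exactly $g(0)$.
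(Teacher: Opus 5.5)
Your reduction is correct and is the paper's own. Factorizations correspond to Galois-stable subspaces of $\mathrm{Sol}(L_4)$. Irreducibility of $M$ leaves only $0$, $\mathbb{C}$, $\mathrm{Sol}(L_4)$ and a possible three-dimensional complement $W$ with $W\cap\mathbb{C}=0$. So everything rests on excluding that complement.

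\textbf{The local model at $t=0$ is wrong.} The paper shows that $t=0$ is maximally unipotent. Exactly one of the three local solutions is log-free, so the local monodromy on $\mathrm{Sol}(M)$ is a single $3\times3$ Jordan block, not blocks of sizes $2$ and $1$. Your basis contains two log-free solutions, $f_1$ and $f_3$, which contradicts the very ``exactly one log-free solution'' fact you cite. Moreover, the unique log-free solution is the holomorphic $y_0=\Phi'/2$, with exponent $0$. There is no log-free $t^{-1}g(t)$; the exponent $-1$ appears only through the Laurent heads of $f_1,f_2$ in the normal form $\{y_0,\;y_0\log t+f_1,\;\tfrac12y_0\log^2t+f_1\log t+f_2\}$. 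Hence $(T-1)^2\neq0$ on $\mathrm{Sol}(M)$, and your nilpotency comparison yields no contradiction.

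\textbf{No argument using only the monodromy at $t=0$ can succeed.} Let $Y_0,Y_1,Y_2$ be antiderivatives of the normal-form solutions. Then $Y_0=\Phi/2+\text{const}$ is single-valued, since $\Phi$ is a power series. This gives
\[
(T-1)Y_1=2\pi i\,(Y_0+\kappa),\qquad (T-1)^2Y_2=(2\pi i)^2(Y_0+\kappa),\qquad (T-1)^3=0 \text{ on } \mathrm{Sol}(L_4).
\]
So $T$ has Jordan type $(3,1)$ on $\mathrm{Sol}(L_4)$. Because $(T-1)^2Y_2$ is not constant, $\mathbb{C}[T]\,Y_2$ is a $T$-stable complement of the constants: locally at $t=0$ the extension splits. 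The residue mechanism you want needs a single-valued solution of $M$ with a nonzero simple-pole term. At $t=0$ the single-valued line is spanned by the holomorphic $y_0$, whose residue is zero.

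\textbf{The fix, as in the paper, is to work at $t=1$.} There the exponents are $\{-1,-\tfrac12,\tfrac12\}$, and $-1$ is not congruent modulo $\mathbb{Z}$ to the other two. So $y=(t-1)^{-1}\bigl(1+O(t-1)\bigr)$ is log-free and fixed by the local monodromy $\gamma$. Let $Y\in W$ be its preimage. Then $\gamma Y-Y$ lies in $W$ and has zero derivative, so it lies in $W\cap\mathbb{C}=0$. But $Y=\log(t-1)+(\text{single-valued})$ gives $\gamma Y-Y=2\pi i$, a contradiction. Note that this fixed-vector argument replaces your $(T-1)^2$ comparison, since the monodromy at $t=1$ is not unipotent.
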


\begin{proof}
Factorizations of $L_4$ correspond to submodules of $\mathrm{Sol}(L_4)$, and $\rmd/\rmd t$ maps
$\mathrm{Sol}(L_4)$ onto $\mathrm{Sol}(M)$ with kernel the constants $\mathbb{C}$. Since $M$ is irreducible, a
one-dimensional submodule has image $0$ and therefore lies in $\mathbb{C}$, while a two-dimensional one would
have an image of dimension one or two---in either case a proper nonzero submodule of $\mathrm{Sol}(M)$. So besides $0$, $\mathbb{C}$ and
$\mathrm{Sol}(L_4)$ the only possible submodule is a three-dimensional $S$; such an $S$ meets $\mathbb{C}$
trivially (otherwise its image would again be two-dimensional), giving $\mathrm{Sol}(L_4)=S\oplus\mathbb{C}$,
i.e.\ exactly the LCLM case with $\mathrm{Sol}(N)=S$.

Suppose such an $S$ existed. Then $\rmd/\rmd t\colon S\to\mathrm{Sol}(M)$ is an isomorphism of differential
modules, and it restricts to one of the local solution spaces on a punctured disc around $t=1$, where we now
work. There the exponents of $M$ are $\{-1,-\tfrac12,\tfrac12\}$ by \eqref{eq:riemann}: the exponent $-1$ is a
simple root of the indicial equation and no exponent equals $-1+k$ for a positive integer $k$, so the Frobenius
solution belonging to it carries no logarithm. Being a Laurent series in integer powers it is single-valued,
and normalized it reads $y=(t-1)^{-1}\big(1+O(t-1)\big)$, of residue $1$.
Let $Y\in S$ be its preimage and let $\gamma$ be the local monodromy at $t=1$, so that $\gamma y=y$. As $\gamma$
lies in the monodromy group, hence in the differential Galois group, and $S$ is a submodule,
$\gamma Y-Y\in S$, while $(\gamma Y-Y)'=\gamma y-y=0$ puts it in $\mathbb{C}$; as $S\cap\mathbb{C}=0$ we get
$\gamma Y=Y$. But $Y'=y$ forces $Y=\log(t-1)+(\text{single-valued})$, so $\gamma Y=Y+2\pi i$---a contradiction.
\end{proof}

A second, global proof runs through the adjoint and is independent of the local one. Write $L_4=A\,N$ with $N$
a right factor of order three and $A$ of order one; since $\mathrm{adjoint}(A\,N)=\mathrm{adjoint}(N)\,
\mathrm{adjoint}(A)$, an order-three right factor of $L_4$ is the same thing as an order-one right factor of
$\mathrm{adjoint}(L_4)$, i.e.\ a hyperexponential solution of it. Now
$\mathrm{adjoint}(L_4)=-(\rmd/\rmd t)\,\mathrm{adjoint}(M)$, so such a solution obeys
$\mathrm{adjoint}(M)(y)=c$ for a constant $c$. Applying an operator with rational coefficients to a
hyperexponential $y$ returns $y$ times a rational function, $\mathrm{adjoint}(M)(y)=S\,y$ with
$S\in\overline{\mathbb{Q}}(t)$: hence $c=0$ would make $y$ a hyperexponential solution of the irreducible
$\mathrm{adjoint}(M)$, while $c\neq0$ gives $y=c/S$, a \emph{rational} solution of $\mathrm{adjoint}(L_4)$.
There is none---and it is enough to search over $\mathbb{Q}(t)$, the space of rational solutions being
Galois-stable, hence defined over $\mathbb{Q}$: the local exponents bound the denominator and the exponents at
infinity bound the degree, and the resulting finite linear system over $\mathbb{Q}$ has only the zero solution. Both proofs are certified in
\texttt{numerics/certify\_lclm.py} (\texttt{CERTIFICATE\_lclm.txt}), which first validates its Frobenius and
rational-solution primitives on operators of known structure---including a manufactured
$\mathrm{LCLM}(N,\rmd/\rmd t)$, on which the same search does find the rational solution, so the negative
result is not vacuous.

\emph{$M$ is not a literal symmetric square.} If $M$ were literally the symmetric square of a second-order operator
with local exponents $\{a,b\}$, its three exponents would be $\{2a,a+b,2b\}$, an arithmetic progression whose middle
term is the mean of the outer two. No triple in \eqref{eq:riemann} is of this form: at $t=\tfrac19$ the exponents
$\{-\tfrac12,0,1\}$ have mean of outer terms $\tfrac14\neq0$, and at $t=1$ the exponents $\{-1,-\tfrac12,\tfrac12\}$
have mean of outer terms $-\tfrac14\neq-\tfrac12$; a single such point already suffices. As a
positive control, the same analysis applied to the simple-cubic LGF returns the symmetric-square-compatible
exponents $\{0,\tfrac12,1\}=\{2a,a+b,2b\}$ with $\{a,b\}=\{0,\tfrac12\}$, correctly reflecting its closed form in
$K$. \emph{The scope of this exclusion is limited, and it is essential not to overstate it.} A gauge transformation
shifts the three exponents at each point by a common constant, and an algebraic pullback multiplies them by the
local ramification index; both preserve arithmetic progressions, so the exponent test also rules out every
\emph{function-multiplier or pullback} equivalent of a symmetric square. It does \emph{not}, however, rule out an
equivalence realized by a \emph{differential intertwiner}---a homomorphism of differential modules of order two,
which shifts the individual local exponents by \emph{different} integers and so destroys the arithmetic
progressions while preserving the Galois-theoretic content. This is exactly the loophole the hyperkagome operator
exploits: as we show in Section~\ref{sec:orthogonal}, $M$ \emph{is} equivalent to a symmetric square through such an
order-two intertwiner, even though it is not a literal symmetric square (nor a multiplier/pullback of one). Correcting an overstatement in
an earlier version of this work, the not-a-literal-symmetric-square property therefore does \emph{not} exclude an
elliptic closed form; the closed-form question is settled instead by the Galois group, which we now compute.

\emph{No algebraic or elementary closed form.} On a separate axis, $M$ has no
\emph{Liouvillian} solutions~\cite{vdPutSinger}, hence no closed form in algebraic or elementary functions. By the
Singer--Ulmer classification~\cite{SingerUlmer}, an irreducible third-order operator has Liouvillian solutions only
if its differential Galois group is finite or imprimitive. A finite group makes every solution algebraic, hence log-free. An imprimitive group
permutes three lines, so in the corresponding basis every element---in particular every local monodromy, since for
a Fuchsian operator the Galois group is the Zariski closure of the monodromy group---is a monomial matrix. Such a
matrix is necessarily semisimple: if its permutation part has order $k$, its $k$-th power is diagonal and
invertible, and an invertible matrix with a diagonalizable power is itself diagonalizable in characteristic zero.
In either case there can be \emph{no logarithmic solution}. But $M$ has a genuine logarithmic solution:
at $t=0$ the indicial exponent $-1$ is repeated and only one log-free (Laurent) local solution exists (of three),
so the local monodromy is non-semisimple. This excludes both the finite and the imprimitive cases, so the Galois
group acts irreducibly and non-solvably and $M$ is non-Liouvillian. The differential-Galois computation of the next
section gives the same conclusion more sharply: $G_M^\circ=\mathrm{SO}(3,\mathbb{C})$ is simple, hence non-solvable,
and an irreducible operator has Liouvillian solutions if and only if $G_M^\circ$ is solvable. Being non-Liouvillian
does \emph{not} preclude an elliptic form---the complete elliptic integral $K$ is itself non-Liouvillian, and the
simple-cubic LGF operator is irreducible and non-Liouvillian yet closed-form in $K$. Indeed, as the next section
shows, the hyperkagome operator \emph{is} orthogonally equivalent to a symmetric square, and Section~\ref{sec:modular}
proves the resulting closed form outright---it is modular, through eta quotients and genus-zero modular
functions, which are themselves non-Liouvillian; the non-Liouvillian result excludes only the algebraic and elementary cases and is
fully consistent with modularity.

\section{An invariant quadratic form: the orthogonal Galois group}
\label{sec:orthogonal}
The symmetric square $\mathrm{Sym}^2(M)$---the order-six operator annihilating the pairwise products $y_iy_j$ of
solutions of $M$---possesses an explicit \emph{rational} solution,%
\footnote{The homomorphism of $M$ to its adjoint, the rational solution $R$ of $\mathrm{Sym}^2(M)$, and the
explicit order-two intertwiner realizing it are certified in exact arithmetic in the accompanying repository.}
\begin{equation}
R(t)=\frac{(15t^2+17t-8)^2}{t^2\,(t-1)^2\,(4t-1)(5t-1)(9t-1)}.
\label{eq:R}
\end{equation}
Being the solution of a linear equation, $R$ is fixed only up to an overall constant $\lambda$, and so is the
Gram matrix it determines: expanding $\lambda R=\sum_{i\le j}c_{ij}\,y_iy_j$ in a fundamental system
$y_1,y_2,y_3$ of $M$ gives a symmetric $3\times3$ matrix $C=(c_{ij})$. We record it for the normalization
$\lambda=-1/272$, which is the one carried by the accompanying certificate. In the basis fixed by the initial
conditions $y_i^{(j)}(t_0)=\delta_{ij}$ at the ordinary point $t_0=\tfrac12$,
{\small
\begin{equation}
\begin{split}
(c_{11},c_{12},c_{13},c_{22},c_{23},c_{33})&=
\Big(-\tfrac{17}{84},\;-\tfrac{460}{441},\;\tfrac{44927}{3087},\;
-\tfrac{1925200}{157437},\;\tfrac{83066560}{1102059},\;-\tfrac{712021969}{7714413}\Big),\\[2pt]
\det C&=\tfrac{3437476900}{7714413}\neq0\ (\text{rank }3).
\end{split}
\end{equation}}
Equation~\eqref{eq:R} is verified as an exact identity:
the six constants are fixed by six series coefficients and then satisfy $194$ further exact relations at
each of two independent ordinary base points ($t_0=\tfrac12$ and $t_0=-\tfrac13$). The agreement is a proof, not
merely a margin, by a Fuchs-relation budget: were $F=\lambda R-\sum c_{ij}y_iy_j$ not identically zero, the span of the six
products and $R$ would be seven-dimensional, with a Fuchsian minimal operator of order seven whose local exponents
are bounded below by the certified data---the pairwise sums of the exponents of \eqref{eq:riemann} at each singular
point, the pole and zero orders of $R$, and \emph{distinct} non-negative integers at every apparent point. The
Fuchs relation then caps the vanishing order of any nonzero element of that span at an $M$-ordinary point at
$109$, while $F$ is verified to vanish to order $200$; hence $F\equiv0$ identically. Independently and at the operator level,
we exhibit an order-two differential intertwiner $\iota\in\mathrm{Hom}(\mathrm{adjoint}(M),M)$ and verify the exact
identity $\mathrm{rightrem}\big(M\,\iota,\ \mathrm{adjoint}(M)\big)=0$ over $\mathbb{Q}(t)$, so $M$ is homomorphic---in
fact isomorphic, since both are irreducible of the same rank---to its adjoint. Four consequences follow.

\emph{(1) The monodromy lies in an orthogonal group.} Because $R$ is single-valued (rational) and
$R=\sum c_{ij}y_iy_j$ with the six products linearly independent (the leading $6\times6$ coefficient system is
nonsingular), every monodromy matrix $\gamma$ obeys $\gamma^{\!\top}C\gamma=C$: the solution space carries a
monodromy-invariant symmetric bilinear form. $M$ is Fuchsian, so by the density theorem the differential Galois
group $G_M$ is the Zariski closure of the monodromy group, whence $G_M\subseteq\mathrm{O}(C)\cong\mathrm{O}(3,\mathbb{C})$.

\emph{(2) The form is nondegenerate.} By irreducibility, the kernel of $C$---an invariant subspace---is $0$, so $C$
is nondegenerate; by Schur's lemma the invariant form is unique up to scale, and in odd dimension it is symmetric
rather than symplectic. We confirm $\det C\neq0$ directly, at both base points.

\emph{(3) The identity component is $\mathrm{SO}(3,\mathbb{C})$.} The genuine logarithm at $t=0$ makes a monodromy
element non-semisimple, so $G_M$ is infinite. Suppose, for contradiction, that $G_M^\circ$ were solvable. As a normal
subgroup of $G_M$, $G_M^\circ$ acts semisimply on the solution space $V$ by Clifford's theorem (since $V$ is
$G_M$-irreducible); by the Lie--Kolchin theorem~\cite{vdPutSinger} the irreducible representations of a connected solvable group are
one-dimensional, so $G_M^\circ$ acts diagonally in some basis and $G_M$ permutes the resulting isotypic components. If
the three characters are distinct, $G_M$ is monomial (imprimitive)---but every monomial matrix is semisimple (if its
permutation part has order $k$, its $k$-th power is diagonal and invertible, and an invertible matrix with a
diagonalizable power is diagonalizable in characteristic zero), contradicting the logarithm. A $(2{+}1)$ pattern of
characters cannot be permuted nontrivially, which would make $V$ reducible---again a contradiction. A single
character makes $G_M^\circ$ scalar, hence, inside $\mathrm{O}(3,\mathbb{C})$, finite, forcing $G_M$ finite and every
element semisimple---once more contradicting the logarithm. Hence $G_M^\circ$ is non-solvable, and the only
non-solvable connected subgroup of $\mathrm{O}(3,\mathbb{C})$ is $\mathrm{SO}(3,\mathbb{C})$; therefore
$G_M^\circ=\mathrm{SO}(3,\mathbb{C})$.

\emph{(4) The determinant character: $G_M=\mathrm{O}(3,\mathbb{C})$ in full.} The Wronskian $\mathrm{Wr}$ of $M$ satisfies
$\mathrm{Wr}'/\mathrm{Wr}=-c_2/c_3$, whose residue at each singular point equals the local exponent sum there minus $3$. This is
half-integral exactly at $t=\tfrac19,\tfrac15,\tfrac14$ and $\infty$ (for instance at $\tfrac19$,
$(-\tfrac12+0+1)-3=-\tfrac52$) and integral at $t=0,1$ and on $p_7$; hence $\det(\text{monodromy})=-1$ at those
four points and $+1$ elsewhere. Thus $G_M$ is \emph{not} contained in $\mathrm{SO}(3,\mathbb{C})$, i.e.\
$G_M=\mathrm{O}(3,\mathbb{C})$ in full, and the determinant character is the quadratic character of the genus-one
curve
\begin{equation}
v^2=(1-4t)(1-5t)(1-9t),
\label{eq:twist}
\end{equation}
ramified at exactly $\{\tfrac19,\tfrac15,\tfrac14,\infty\}$---an even number of points, as a double cover requires,
and the same four points at which the pole orders of $R$ in \eqref{eq:R} are odd.

Since $\mathrm{SO}(3,\mathbb{C})\cong\mathrm{PSL}(2,\mathbb{C})$ through the symmetric square of the standard
two-dimensional representation, $G_M^\circ=\mathrm{SO}(3,\mathbb{C})$ means that $M$ \emph{is projectively equivalent
to $\mathrm{Sym}^2(V_2)$} for a second-order operator $V_2$ (defined over $\mathbb{Q}(t,v)$ with $v$ as in
\eqref{eq:twist}, or over $\mathbb{Q}(t)$ after the corresponding quadratic twist). The equivalence is carried by
an explicit intertwiner, but whether that intertwiner is \emph{rational} depends on how $V_2$ is normalized, and
the point is worth making plainly because the search that a reader would run first returns nothing. Against
$\mathrm{Sym}^2(V_2)$ with $V_2$ in the projective normal form $w''+Q_Vw=0$ of \eqref{eq:QV} there is no
homomorphism over $\mathbb{Q}(t)$ in either direction; after the single conjugation $M_v:=v\circ M\circ v^{-1}$
---which again has coefficients in $\mathbb{Q}(t)$, because $v'/v$ is rational---the intertwiner is the
\emph{order-one} operator $T=\rho_0+\rho_1\,\rmd/\rmd t$ of \eqref{eq:y0rhos} below; Remark~\ref{rem:obstruction}
says why, and why the obstruction disappears one level up.
The equivalence shifts individual local exponents by \emph{different} integers and so destroys
the arithmetic progressions, which is exactly why $M$ is not a literal symmetric square (nor a multiplier or
pullback of one) yet is orthogonally, projectively, one. The same structural phenomenon is known in the differential
algebra of lattice models: Assis, van Hoeij and Maillard~\cite{AVM2016} encounter an order-three operator $N_3$
whose symmetric square has a rational solution,%
\footnote{Section~4.2 of~\cite{AVM2016} attributes that rational solution to the \emph{exterior} square of
$N_3$; this is a misprint for the \emph{symmetric} square, as the remainder of the same sentence---which
concludes that $N_3$ is homomorphic to a symmetric square---makes clear. For an irreducible order-three
operator the exterior square is isomorphic to the dual twisted by the Wronskian and can carry no rational
solution, whereas a rational solution of the symmetric square is precisely the invariant quadratic form used
here.}
hence is homomorphic to a symmetric square $\mathrm{Sym}^2(V_2)$; they recover $V_2$ by a conic construction,
and solve it as a $\,_2F_1$ with an algebraic (modular) pullback. For the hyperkagome that conic construction has now been carried
out in the accompanying analysis: the transported invariant form admits an explicit rational conic point, yielding
an \emph{explicit} second-order operator $V_2$ over $\mathbb{Q}(t)$ (the quadratic twist enters only through the
pullback), taken here in the form $\rmd^2/\rmd t^2+a_1\rmd/\rmd t+a_0$ of \texttt{numerics/V2\_data.json}
rather than in the projective normal form \eqref{eq:QV}, a distinction that matters for the intertwiner
exactly as noted above; it is verified in exact arithmetic through the bridge identity $f_0^2=P(y_0)$ with $V_2(f_0)=0$ on every
Laurent coefficient through order $t^{107}$, where $P$ is an explicit order-two intertwiner and $y_0=\Phi'/2$ the
analytic solution. The operator $V_2$, the rational conic point, and this verification are public:
\texttt{numerics/V2\_data.json} and \texttt{numerics/certify\_bridge.py} in the repository, the latter validating
each of its series primitives on operators of known structure, and failing on a deliberately perturbed conic
point, before the certified run.
The pullback question is settled, in its sharpest form, in Section~\ref{sec:modular}: $V_2$ is the uniformizing
operator of the modular curve $X(\Gamma_0(30)^+)$, its mirror map at the maximally unipotent point
$t=0$~\cite{Morrison1993} (see below) is the level-30
eta-quotient parametrization \eqref{eq:hauptt}, and the sought closed form is modular---an
Assis--van~Hoeij--Maillard-style $\,_2F_1$ with \emph{algebraic} pullback is now explicit: $1728/j$ is
algebraic of degree eight over $\mathbb{Q}(t)$ and solvable in radicals
(Proposition~\ref{prop:pullback})---though the modular parametrization itself remains the closed form in its natural
coordinates.%
\footnote{The base $\,_2F_1$ is a choice, and it is not a neutral one. Taking
$\,_2F_1\!\left(\tfrac18,\tfrac38;1;\cdot\right)$ in place of
$\,_2F_1\!\left(\tfrac1{12},\tfrac5{12};1;\cdot\right)$ gives a markedly simpler exact expression: the
pullback then satisfies an equation of bidegree $(4,24)$ rather than $(8,72)$, over an index-two subfield
of \eqref{eq:radfield} reached by two square roots instead of three. See the footnote to
Proposition~\ref{prop:pullback}.}

\begin{rem}[Where the rational intertwiner hides]
\label{rem:obstruction}
The direct search returns nothing, and the reason is structural rather than computational. With $V_2$ in the
projective normal form $w''+Q_Vw=0$ of \eqref{eq:QV}, the operator
$\mathrm{Sym}^2(V_2)=\rmd^3/\rmd t^3+4Q_V\,\rmd/\rmd t+2Q_V'$ is unimodular, whereas the Wronskian
logarithmic derivative of $M$ has residue $-5/2$ at each of $t=\tfrac14,\tfrac15,\tfrac19$. No residue of
$f'/f$ can fail to be an integer, so $\Lambda^3(M)$ is non-trivial, and two irreducible operators with
non-isomorphic determinants are non-isomorphic (for $\mathrm{Sym}^2(V_2)$ the irreducibility is
Lemma~\ref{lem:sl2} below): in Maple,
\texttt{Homomorphisms(symmetric\_power(V2,2),M)} is empty in both directions of necessity. What removes the
obstruction is that the determinant character of $M$ \emph{is} the quadratic character $\chi_v$ of
\eqref{eq:twist}: conjugating by $v$ multiplies it by $\chi_v^3=\chi_v$, leaving $\chi_v^2=1$. Concretely the
logarithmic derivative shifts by $3\,v'/v$, and the certificate exhibits an explicit $f\in\mathbb{Q}(t)^*$ with
$f'/f$ equal to the result. The intertwiner is then found by
\texttt{Homomorphisms(symmetric\_power(V2,2),\,mult($v$,M,$1/v$))} with
$v=(1-4t)^{1/2}(1-5t)^{1/2}(1-9t)^{1/2}$, and is $T$. The same parity removes the obstruction one level up, and
on both sides: $\det\mathrm{Sym}^2=(\det)^4$ trivializes the determinant of $\mathrm{Sym}^2(M)$ because
$\chi_v^4=1$, and $\mathrm{Sym}^4$ of a rank-two operator with trivial determinant again has trivial determinant
($\det\mathrm{Sym}^4=(\det)^{10}$).
That is only the removal of an obstruction, not a construction; the homomorphism between
$\mathrm{Sym}^2(M)$ and $\mathrm{Sym}^4(V_2)$ is easy to check in Maple and is not certified here,
whereas $M$ and $\mathrm{Sym}^2(V_2)$ are provably not homomorphic. There is no tension with the rational
order-two $P$ of the bridge identity
$f_0^2=P(y_0)$ above: that $P$ intertwines $M$ with the symmetric square of $V_2$ in the \emph{non-unimodular}
normalization $\rmd^2/\rmd t^2+a_1\rmd/\rmd t+a_0$, whose determinant character differs from that of $M$ by
integer residues only, so there the obstruction is absent. That $P$ is an intertwiner in the module sense,
and not merely the single-solution statement $f_0^2=P(y_0)$, is itself an exact operator identity: the right
remainder of $\mathrm{Sym}^2(V_2)\circ P$ on division by $M$ is zero, whereas the same $P$ against the
projective normal form leaves a nonzero remainder. It is precisely the passage to the projective
normal form that reintroduces the obstruction. The computational content of this remark---the residues, the
determinant non-isomorphy (which yields the non-existence of the rational intertwiner given the two
irreducibility inputs already cited), the trivialization of the character, the order-one $T$, and this
last consistency---is certified in exact arithmetic, with negative controls, in
\texttt{numerics/certify\_intertwiner.py}, which builds $M_v$ by expanding $(\rmd/\rmd t-v'/v)^i$ and so
never leaves $\mathbb{Q}(t)$; the Maple invocation is quoted as a convenience and is not itself certified.
\end{rem}

\emph{Maximal unipotent monodromy at $t=0$ ($n=2$).} The three exponents at $t=0$, $\{-1,-1,0\}$ (indicial
polynomial $-64\,\rho(\rho+1)^2$), are all integers, so the local monodromy is unipotent. Exactly one of the three
local solutions is log-free (single-valued Laurent); since the number of log-free solutions equals the number of
Jordan blocks of a unipotent, the monodromy is a \emph{single} $3\times3$ block: the point is \emph{maximally
unipotent} (MUM), with maximal log power $n=2$. It is MUM in the monodromy sense with exponents $\{-1,-1,0\}$
rather than $\{0,0,0\}$: an integer recombination of the Frobenius solutions brings the basis to the canonical MUM
normal form $\{\,y_0,\;y_0\log t+f_1,\;\tfrac12 y_0\log^2t+f_1\log t+f_2\,\}$, with the \emph{same} series $f_1$
multiplying the logarithm in the third solution as appears in the second,%
\footnote{The recombination ($\tilde y_2=y_2-\tfrac{13}{30}\,y_1$ in the natural rescaling) is verified exactly, term by term and through the identity
$M(y_0\log t+f_1)=0$ order by order, in \texttt{verify\_mum\_normalform.py} in the repository.}
where $y_0=\Phi'/2$ is the holomorphic solution and $f_1,f_2$ carry $1/t$ Laurent heads. This settles a point that
was initially unclear: $n=2$, not $1$. The even value $n=2$ is consistent with the orthogonal (rather than
symplectic) case of the conjectured parity correspondence of Hassani, Maillard and Zenine~\cite{HMZ2025}, in which the parity
of the maximal log power tracks the orthogonal/symplectic type of the differential Galois group. In the energy
variable $t=(E-1)^{-2}$ the point $t=0$ is $E=\infty$ (equivalently $z=\infty$), so this is the local expansion at
infinity, and the $1/t$ Laurent heads of the logarithmic partners are a feature of that point rather than a
pathology.

\section{The modular parametrization: \texorpdfstring{$V_2$ uniformizes $X(\Gamma_0(30)^+)$}{V2 uniformizes X(Gamma0(30)+)}}
\label{sec:modular}

We now prove the statement toward which Section~\ref{sec:orthogonal} points, and which the lattice-side
reduction of Section~\ref{sec:watson} below independently anchors: the second-order
operator $V_2$ is the \emph{uniformizing differential equation} of a classical modular curve, and the variable
$t$ generates its genus-zero function field.%
\footnote{Terminology: in the Conway--Norton normalization~\cite{ConwayNorton1979} a degree-one generator of
the genus-zero function field of $X(\Gamma)$ is called a \emph{Hauptmodul} of $\Gamma$, and in that sense $t$
is one for $\Gamma_0(30)^+$. We reserve the word here for the pullback $1728/j$ that arises as the argument $H$
of $\,_2F_1\!\left(\tfrac1{12},\tfrac5{12};1;H\right)$ in an
Assis--van~Hoeij--Maillard-style representation~\cite{AVM2016}, and refer to the explicit
map \eqref{eq:etaquot}--\eqref{eq:hauptt} as a \emph{Weber-like function parametrization}.}
Throughout, $\tau$ lies in the upper half-plane, $q=e^{2\pi i\tau}$,
$\eta(\tau)=q^{1/24}\prod_{n\ge1}(1-q^n)$, and $\Gamma_0(30)^+$ denotes the extension of $\Gamma_0(30)$ by its
seven Atkin--Lehner involutions $w_e$, $e\in\{2,3,5,6,10,15,30\}$, of which $w_{30}$ is the \emph{Fricke
involution}~\cite{AtkinLehner1970,ConwayNorton1979}; we
henceforth write $v$ for the twist variable of \eqref{eq:twist}, reserving $u$ for the modular function
below. The projective normal form of the explicit $V_2$ of Section~\ref{sec:orthogonal}---the operator
$w''+Q_V(t)\,w=0$ obtained from $V_2=\partial_t^2+a_1\partial_t+a_0$ via
$Q_V=a_0-\tfrac14a_1^2-\tfrac12a_1'$, invariant under every function-multiplier gauge and hence a projective
invariant of $M$ itself---is
\begin{equation}
\begin{split}
Q_V(t)&=\frac{N(t)}{4\,t^2(t-1)^2(4t-1)^2(5t-1)^2(9t-1)^2},\\[2pt]
N(t)&=24300\,t^8-58860\,t^7+73437\,t^6-44294\,t^5\\
&\quad{}+15111\,t^4-3160\,t^3+407\,t^2-30\,t+1.
\end{split}
\label{eq:QV}
\end{equation}
The double poles of $Q_V$ sit \emph{exactly} at the six points $t=0,\tfrac19,\tfrac15,\tfrac14,1,\infty$: the
seven apparent singularities on $p_7$, and every trace of the order-two intertwiner, cancel identically in the
normal form. The Laurent head of $Q_V$ at each of $t=\tfrac19,\tfrac15,\tfrac14,1,\infty$ is $\tfrac3{16}$
(local exponent difference $\tfrac12$), and at $t=0$ it is $\tfrac14$ (exponent difference $0$: the MUM point).
Define the level-30 eta quotient
\begin{equation}
u(\tau)=\prod_{d\mid 30}\eta(d\tau)^{r_d}
=\left[\frac{\eta(\tau)\,\eta(6\tau)\,\eta(10\tau)\,\eta(15\tau)}
{\eta(2\tau)\,\eta(3\tau)\,\eta(5\tau)\,\eta(30\tau)}\right]^{3},
\label{eq:etaquot}
\end{equation}
i.e.\ $(r_d)=(3,-3,-3,-3,3,3,3,-3)$ on $d=(1,2,3,5,6,10,15,30)$, and set
\begin{equation}
t=\frac{u}{u^2+7u+1}\,,\qquad\text{equivalently}\qquad \frac1t=u+7+\frac1u\,,
\label{eq:hauptt}
\end{equation}
so that $t=q-4q^2+12q^3-34q^4+90q^5-\cdots$. We refer to
\eqref{eq:etaquot}--\eqref{eq:hauptt} as a \emph{Weber-like function parametrization} of $t$.

The generator so obtained is a classical object rather than a new one, and we record the
identification explicitly. The eta quotient \eqref{eq:etaquot} is the McKay--Thompson series of class
$30\mathrm{A}$ for the Monster, and $u+3+1/u$ is that of class $30\mathrm{B}$~\cite{FordMcKayNorton1994};
in our normalization, which places the cusp at $t=0$, one has $1/t=T_{30\mathrm{B}}+4$, the two
conventions differing only in the $q^0$ coefficient. What is at stake here is therefore not the
$q$-series but the \emph{proof} that this particular modular function is the one the lattice Green's
function selects.

\begin{thm}\label{thm:modular}
$1/t$ generates the genus-zero function field of $X(\Gamma_0(30)^+)$ (so $t$ is a degree-one
generator vanishing at the unique cusp), and the Schwarzian identity
\begin{equation}
\{\tau,t\}=2\,Q_V(t),\qquad
\{\tau,t\}\equiv\frac{\tau'''}{\tau'}-\frac32\Big(\frac{\tau''}{\tau'}\Big)^{\!2},
\quad{}'=\frac{\rmd}{\rmd t},
\label{eq:schwarz}
\end{equation}
holds exactly.
\end{thm}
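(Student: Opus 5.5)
The plan is to prove the two assertions in order: first that $1/t$ is a Hauptmodul for $\Gamma_0(30)^+$, then the Schwarzian identity. The second will reduce, by an a priori pole-degree bound, to matching a finite number of $q$-expansion coefficients.

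\emph{Step 1: invariance of $1/t$.} I would first check that $u$ is a modular function on $\Gamma_0(30)$ using Newman's/Ligozat's criteria. We have $\sum r_d=0$, and $\sum d\,r_d=3(1-2-3-5+6+10+15-30)=-24\equiv0\pmod{24}$. Also $\sum (30/d)r_d=3(30-15-10-6+5+3+2-1)=24\equiv0\pmod{24}$. Finally $\prod d^{r_d}=(1\cdot6\cdot10\cdot15/2\cdot3\cdot5\cdot30)^3=1$ is a square, and the character is trivial. Next, for each Atkin--Lehner $w_e$ I would use the standard transformation $\eta(d\,w_e\tau)\propto\eta(d'\tau)$ with $d'=de/\gcd(d,e)^2$. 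This permutes the divisors of $30$. Because $r_{d'}=\pm r_d$ uniformly (the sign is the parity of the number of primes of $e$), it sends $u\mapsto c_e\,u^{\pm1}$. The constants $c_e$ are pinned down, for example at a fixed point of $w_e$ or by comparing $q$-expansions at the two cusps. I expect $c_e=1$, so that $u\mapsto u$ or $u\mapsto1/u$. Either way $u+7+1/u$ is $\Gamma_0(30)^+$-invariant.

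\emph{Step 2: degree one.} Ligozat's formula gives the order of $u$ at each of the eight cusps $1/c$, $c\mid30$, of $X_0(30)$. Since $\Gamma_0(30)^+$ permutes these cusps transitively, $X(\Gamma_0(30)^+)$ has a single cusp. I expect $u$ to have a simple zero at $\infty$, a simple pole at $0$, and orders summing to $0$ with $|{\rm ord}|$ totalling $16/2$ per sign, or similar. Then $u+7+1/u$ has poles only at cusps, of total order $16=[\Gamma_0(30)^+:\Gamma_0(30)]$. Hence it has degree $1$ on the quotient, and the quotient has genus zero (equivalently by Riemann--Hurwitz from the known genus of $X_0(30)$ and the fixed-point counts of the $w_e$). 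Its expansion $1/t=q^{-1}+4+\cdots$ shows the pole sits at the cusp, so $t$ vanishes simply there.

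\emph{Step 3: the Schwarzian.} The inverse function $\tau(t)$ is multivalued, with monodromy in $\Gamma_0(30)^+\subset\mathrm{PSL}_2(\mathbb{R})$ acting by Möbius maps. Since the Schwarzian is Möbius-invariant, $\{\tau,t\}$ is single-valued on $X(\Gamma_0(30)^+)\cong\mathbb{P}^1_t$ and meromorphic, hence a rational function $S(t)$. Its singularities lie only at the branch points of $\tau\mapsto t$, namely the cusp and the elliptic points. At a cusp one has $\tau\sim\log t/2\pi i$, giving $S=\tfrac12 t^{-2}+O(t^{-1})$, which matches $2\cdot\tfrac14$. At an elliptic point of order $2$ at $t_0$ one has $\tau-\tau_0\sim(t-t_0)^{1/2}$, giving $S=\tfrac38(t-t_0)^{-2}+O((t-t_0)^{-1})$, which matches $2\cdot\tfrac3{16}$. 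The key input is that the elliptic points are exactly $t=\tfrac19,\tfrac15,\tfrac14,1,\infty$, all of order two.

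I expect this identification to be the main obstacle. I would try first the valence formula for the weight-two meromorphic form $t'(\tau)=\rmd t/\rmd\tau$ on $\Gamma_0(30)^+$. Its zeros in $\tau$ occur exactly over the elliptic points. Writing $t'=2\pi i\,q\,\rmd t/\rmd q$ as an explicit eta/Eisenstein combination and checking that $t'^2\prod(1-t/t_i)^{-1}\cdots$ is a holomorphic form with the predicted divisor locates them. As a cross-check I would locate them via fixed points of the $w_e$: for instance $u=\pm1$ gives $1/t=9,5$, which are the ramification of $u\mapsto u+1/u$. The remaining points come from fixed points of the $w_e$ with $u\mapsto u$, evaluated numerically and then confirmed exactly.

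\emph{Step 4: pole-degree bound.} Given these locations, $D=2Q_V-S$ has at most simple poles at $0,\tfrac19,\tfrac15,\tfrac14,1$. Regularity of the quadratic differential $S\,\rmd t^2$ at the order-two point $t=\infty$, matched with $Q_V$'s head, forces $D=O(t^{-3})$ there. Hence $D=P(t)/[t(t-1)(4t-1)(5t-1)(9t-1)]$ with $\deg P\le2$, so there are three unknowns. Computing $S$ from the $q$-series of $t$ through, say, $O(t^{2})$ in its $t$-expansion, and comparing with the Laurent series of $2Q_V$, yields three linear conditions forcing $P=0$. The bound is what turns the finite series check into a proof.
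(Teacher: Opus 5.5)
You have the paper's architecture: Ligozat plus Atkin--Lehner for the Hauptmodul, rationality of $\{\tau,t\}$ with universal heads, and a pole-degree bound closed off by a finite $q$-series check. But Step~3 leaves a real hole, and Step~2 contains numerical slips that must be fixed, not hedged.

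\textbf{Step 2.} The index is $[\Gamma_0(30)^+:\Gamma_0(30)]=|\mathcal{W}|=8$, not $16$. Ligozat gives $u$ simple zeros at the cusps of denominator $1,6,10,15$ and simple poles at $2,3,5,30$. So $u$ has a simple \emph{pole} at $\infty$, consistent with $u=q^{-1}+\cdots$, and $\deg u=4$. Hence $u+7+1/u$ has eight simple poles on $X_0(30)$ and degree $8/8=1$ on the quotient. With your index $16$, the correct Ligozat output would give degree $\tfrac12$.

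\textbf{Step 3.} This is the substantive gap. Your Step~4 ansatz $D=P(t)/[t(t-1)(4t-1)(5t-1)(9t-1)]$ with $\deg P\le2$ presupposes that the elliptic images are exactly $\tfrac19,\tfrac15,\tfrac14,1,\infty$, all of order two. You never prove this. ``Evaluated numerically and then confirmed exactly'' is not an argument. The valence-formula route would need an explicit certified identity that you do not supply, for instance $\ell^2$ equal to a rational function of $u$ times a weight-four eta quotient, checked past a Sturm bound. You also never justify ``all of order two.''

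\textbf{The missing idea.} The paper shows that only the \emph{count} of elliptic points is needed, not their locations.
\begin{itemize}
\item $\Gamma_0(30)$ is torsion-free ($\nu_2=\nu_3=0$). Any elliptic $\gamma\in\Gamma_0(30)^+$ has $\gamma^2\in\Gamma_0(30)$, so $\gamma^2=1$ and every elliptic point has order two.
\item $\chi(\Gamma_0(30)^+)=\tfrac18\chi(\Gamma_0(30))=-\tfrac32$. Gauss--Bonnet with genus zero and one cusp then gives exactly five order-two points.
\item Wherever those five images lie, $D=\{\tau,t\}-2Q_V$ has poles of order at most two there, at $\tfrac19,\tfrac15,\tfrac14,1$, and at $\infty$, and at most a simple pole at $t=0$. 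Its pole divisor therefore has degree at most $2(5+4+1)+1=21$.
\item Vanishing of $D(t(q))$ to order $80>21$ forces $D\equiv0$.
\end{itemize}
The locations $\{\tfrac19,\tfrac15,\tfrac14,1,\infty\}$ then drop out as a corollary, read off from the double poles of $2Q_V$.

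Your observation that $D=O(t^{-3})$ at $\infty$ is correct, since both sides have head $\tfrac38t^{-2}$. So if Step~3 were genuinely proved, your sharper ansatz would need only three coefficients instead of eighty. But it does so by moving the hardest part into an unproved lemma. Trading the sharp ansatz for a cruder bound checked with more coefficients is exactly what closes the paper's proof.
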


Concretely, $X(\Gamma_0(30)^+)$ \emph{is} the projective line: genus zero together with a
degree-one generator means that $t$ is a global coordinate, so
$X(\Gamma_0(30)^+)\cong\mathbb{P}^1_t$, and \eqref{eq:etaquot}--\eqref{eq:hauptt} is its explicit
rational parametrization $\tau\mapsto t=u/(u^2+7u+1)$. There is no plane model of the curve itself
to exhibit, and none is wanted: the content of the theorem is not that the curve is
rational---that is what genus zero means---but that \emph{this} coordinate is the one the lattice
Green's function selects, and that $V_2$ is its uniformizing equation. If a plane model is wanted
nonetheless, the one carried by \eqref{eq:hauptt} is the conic
\begin{equation}
P(u,t)=t\,u^{2}+(7t-1)\,u+t=0,
\qquad\text{parametrized by}\qquad
u=s,\quad t=\frac{s}{s^{2}+7s+1},
\label{eq:conic}
\end{equation}
of bidegree $(2,1)$ in $(u,t)$. Being of degree one in $t$ it is a graph over the $u$-line, hence
of geometric genus zero, and the displayed rational parametrization exhibits it as such. Read the
other way it is the degree-two covering
$X(\Gamma_0(30){+}6,10,15)\to X(\Gamma_0(30)^+)$, both of genus zero, written in the generators
$u$ and $t$ produced in the proof below; its deck involution is $u\mapsto1/u$, induced by
$w_{30}$. The discriminant of \eqref{eq:conic} in $u$ is $45t^2-14t+1$, so the covering is
branched exactly over $t=\tfrac19$ and $t=\tfrac15$, two of the five order-two orbifold points
listed above.

The identity \eqref{eq:schwarz} is of the type
studied in~\cite{AbdelazizMaillard2017,AbdelazizMaillard2017b}. The Schwarzian route from an
order-three operator with orthogonal Galois group, through the symmetric square of an underlying
order-two operator, to a modular correspondence is developed
in~\cite{AbdelazizMaillard2017,AbdelazizMaillard2017b}; \eqref{eq:schwarz} is an instance in
which the correspondence is pinned to a named group and proved rather than exhibited.

We note that the right-hand side of \eqref{eq:schwarz} is not new as a $q$-series: a Schwarzian
$Q$-value for every genus-zero Conway--Norton class, including $30\mathrm{B}$, is tabulated
in~\cite{LianWiczer2006}. We record the comparison explicitly, because the two are written in
different generators and a Schwarzian $Q$ is not a function but a \emph{quadratic differential}.
The tabulated entry is a rational function of the generator $z=1/T_{30\mathrm{B}}$, related to ours
by the M\"obius change $1/t=T_{30\mathrm{B}}+4$, that is
\begin{equation}
t=\frac{z}{1+4z},\qquad z=\frac{t}{1-4t},
\label{eq:zt}
\end{equation}
and, written in factored form (the source prints its denominator expanded, which disguises the
agreement), it reads
\begin{equation}
Q_{30\mathrm{B}}(z)=\frac{2700z^8-2340z^7+2613z^6+1386z^5+311z^4+112z^3+15z^2+2z+1}
{4\,z^2(z-1)^2(3z+1)^2(4z+1)^2(5z-1)^2}\,.
\label{eq:Q30B}
\end{equation}
Since $\{t,z\}=0$ for a M\"obius $t(z)$, the chain rule $\{\tau,z\}=\{\tau,t\}\,(\rmd t/\rmd z)^2$
says that a Schwarzian right-hand side transports with the square of the Jacobian, so the
identification is
\begin{equation}
Q_V(t)\,\rmd t^2=Q_{30\mathrm{B}}(z)\,\rmd z^2,
\qquad\text{equivalently}\qquad
Q_{30\mathrm{B}}(z)=\frac{Q_V\big(t(z)\big)}{(1+4z)^4},
\label{eq:QVQ30B}
\end{equation}
an identity of rational functions, verified in exact rational arithmetic in
\texttt{numerics/\allowbreak certify\_\allowbreak tabulated.py}. We stress the Jacobian because the
substitution $Q_V(t(z))$ alone is \emph{not} equal to $Q_{30\mathrm{B}}(z)$---it differs by exactly
the factor $(1+4z)^4$---so a reader who compares the two without it will see a spurious mismatch.
Under \eqref{eq:zt} the six double poles correspond as
$t=0,\tfrac19,\tfrac15,\tfrac14,1,\infty\ \leftrightarrow\ z=0,\tfrac15,1,\infty,-\tfrac13,-\tfrac14$,
with the same Laurent heads on both sides: $\tfrac14$ at the cusp and $\tfrac3{16}$ at each of the
five order-two elliptic images. That table is obtained by fitting an ansatz to a bounded
$q$-expansion, with no degree bound and no proof of termination. The contribution
of~\eqref{eq:schwarz} is the a priori pole-degree bound established in the proof of
Theorem~\ref{thm:modular}, which converts such a match into an identity; in particular, through
\eqref{eq:QVQ30B}, it retro-proves the tabulated $30\mathrm{B}$ entry.

Equivalently: the ratio of two solutions of $w''+Q_Vw=0$ inverts the parametrization $t(\tau)$, $V_2$ is the
uniformizing operator of the orbifold $X(\Gamma_0(30)^+)$, and the mirror map of $V_2$ at its MUM point $t=0$
is precisely \eqref{eq:hauptt}, with the modular $q$ as nome. (The identification was found by matching the
mirror map against \eqref{eq:etaquot}--\eqref{eq:hauptt} to all $80$ computed orders; the proof below shows the
match is forced.)

The uniformizing equation itself is not new. The operator and its Schwarzian potential appear as
the row $\Gamma_0(30)^+$ of a table of genus-zero groups in Lian and Yau~\cite{LianYau1995}: there
the order-three operator is the Picard--Fuchs operator of a family of algebraic K3 surfaces,
complete intersections in a product of projective spaces, degenerated along the diagonal of its
deformation parameters, and the genus-zero group and its Hauptmodul are, in their words, of the
types considered by Conway and Norton~\cite{ConwayNorton1979}. In the generator $X$ of that table, related to ours by
$X=t/(1-t)$, the tabulated order-three operator is the symmetric square of a second-order operator
whose projective normal form transports, as a quadratic differential, to exactly our $Q_V$; both
statements are certified in exact rational arithmetic in
\texttt{numerics/\allowbreak certify\_\allowbreak tabulated.py}. What Theorem~\ref{thm:modular}
adds is the a priori pole-degree bound, which turns an exhibited match into a proof, and---through
Section~\ref{sec:orthogonal}---the identification of that operator with the hyperkagome lattice
Green's function.

\begin{proof}
\emph{Modular part.} The exponent vector $(r_d)$ satisfies Ligozat's criterion~\cite{Ligozat1975}:
$\sum_d d\,r_d=-24\equiv0$ and $\sum_d(30/d)\,r_d=24\equiv0\pmod{24}$, the weight $\tfrac12\sum_dr_d$
vanishes, and $\prod_d d^{\,r_d}=1$ is a rational square; hence $u$ is a modular \emph{function} on
$\Gamma_0(30)$. Ligozat's cusp-order formula gives its divisor on $X_0(30)$ (cusps labeled by their
denominators $d\mid30$): simple zeros at the cusps $\{1,6,10,15\}$ and simple poles at $\{2,3,5,30\}$, so $u$
has degree four. For the Fricke involution, the exact antisymmetry $r_{30/d}=-r_d$ forces $w_{30}(u)=1/u$
structurally (the eta multipliers cancel because $\sum_dr_d=0$ and $\prod_d(30/d)^{r_d}=1$); comparing
divisors and leading Fourier coefficients---two modular functions with equal divisors and equal leading
coefficient are equal---shows likewise that $w_2,w_3,w_5$ each send $u\mapsto1/u$, whence the complementary
products $w_6,w_{10},w_{15}$ fix $u$. Since $\{1,w_6,w_{10},w_{15}\}$ is an index-two subgroup of the
Atkin--Lehner group $\mathcal{W}\cong(\mathbb{Z}/2)^3$, $u$ descends to a function of degree $4/4=1$ on
$X(\Gamma_0(30){+}6,10,15)$: it generates the function field there, and that curve has genus zero. Consequently
$1/t=u+1/u+7$ is invariant under all of $\Gamma_0(30)^+$ and descends to degree one on the quotient: it
generates the function field of $X(\Gamma_0(30)^+)$, and that curve has genus zero. It has a single cusp: $w_e$ maps the
cusp of denominator $d$ to that of denominator $de/\gcd(d,e)^2$, a simply transitive action of $\mathcal{W}$ on the
eight cusps of $X_0(30)$. The normalization \eqref{eq:hauptt} puts the cusp at $t=0$, with $t=q+O(q^2)$ (cusp
width one).

\emph{Uniformization part.} First, the signature of $\Gamma_0(30)^+$ is forced. $\Gamma_0(30)$ is
torsion-free: for $30=2\cdot3\cdot5$ the elliptic-point counts are
$\nu_2=\prod_{p\mid30}\big(1+\big(\tfrac{-1}{p}\big)\big)=0$ (the factor at $p=3$ vanishes) and
$\nu_3=\prod_{p\mid30}\big(1+\big(\tfrac{-3}{p}\big)\big)=0$ (the factor at $p=2$
vanishes)~\cite{DiamondShurman}. Any elliptic $\gamma\in\Gamma_0(30)^+$ has $\gamma^2\in\Gamma_0(30)$, and
$\gamma^2$ is elliptic or the identity; torsion-freeness forces $\gamma^2=1$, so \emph{every} elliptic point
of $\Gamma_0(30)^+$ has order two. The orbifold Euler characteristic is
$\chi=\tfrac18\,\chi(\Gamma_0(30))=-\tfrac18\cdot\tfrac{72}{6}=-\tfrac32$ (the index is
$[\mathrm{PSL}_2(\mathbb{Z}):\Gamma_0(30)]=72$), i.e.\ hyperbolic covolume $3\pi$---\emph{nine} times that of
$\mathrm{PSL}_2(\mathbb{Z})$, a covolume ratio and not a subgroup index, since $\Gamma_0(30)^+$ is not a
subgroup of $\mathrm{PSL}_2(\mathbb{Z})$. With genus zero and one cusp, Gauss--Bonnet
$2-1-\sum_i(1-e_i^{-1})=-\tfrac32$ gives exactly \emph{five} elliptic points, all of order two: the signature
$(0;2,2,2,2,2;\,1\ \mathrm{cusp})$ is derived, not assumed. Second, the identity \eqref{eq:schwarz} follows
from a degree count. Because $t$ is a degree-one function on the genus-zero quotient of a finite-covolume
Fuchsian group, $\{\tau,t\}$ is a
\emph{rational} function of $t$, with double poles precisely at the images of the elliptic points and of the
cusp, and with universal Laurent heads: $\tfrac12(1-e^{-2})$ at an order-$e$ elliptic image and $\tfrac12$ at
a cusp image---so $\tfrac12\{\tau,t\}$ carries heads $\tfrac3{16}$ and $\tfrac14$, matching \eqref{eq:QV}.
Consider $D(t)=\{\tau,t\}-2Q_V(t)$, a rational function. Its poles are confined to the five (a priori
unlocated) elliptic images, the cusp image $t=0$, the remaining finite poles $\tfrac19,\tfrac15,\tfrac14,1$ of
$Q_V$, and $t=\infty$, each of order at most two; at $t=0$ the heads $\tfrac1{2t^2}$ cancel exactly, leaving
at most a simple pole. The pole divisor of $D$ on $\mathbb{P}^1$ therefore has degree at most
$2(5+4+1)+1=21$. On the other hand we verify, in exact rational arithmetic, that the $q$-expansion of
$D(t(q))$---computable term by term from \eqref{eq:etaquot}--\eqref{eq:hauptt}---vanishes through order
$q^{80}$; since $t(q)=q+O(q^2)$ is a formal immersion, $D$ vanishes at $t=0$ to order at least $80$. A nonzero
rational function has equal degrees of zeros and poles, and $80>21$; hence $D\equiv0$.
\end{proof}

The proof is exact and non-numerical throughout: the modular part is finite eta-quotient combinatorics, and
the uniformization part is a finite exact series computation closed off by the degree bound. As a corollary
the double poles of $\{\tau,t\}=2Q_V$ are exactly $t=0,\tfrac19,\tfrac15,\tfrac14,1,\infty$, so the five
order-two points of $\Gamma_0(30)^+$ project precisely to $\{\tfrac19,\tfrac15,\tfrac14,1,\infty\}$ and the
cusp to $t=0$: the singular locus of $V_2$ (equivalently of $M$, up to the apparent $p_7$) \emph{is} the
orbifold locus of $X(\Gamma_0(30)^+)$, with no elliptic point hiding elsewhere. In $u$-coordinates the
dictionary is explicit: $t=\tfrac19\leftrightarrow u=1$ and $t=\tfrac15\leftrightarrow u=-1$ (the branch
points of $u\mapsto u+1/u$), while $t=\tfrac14,\,1,\,\infty$ correspond to the root pairs of $u^2+3u+1$,
$u^2+6u+1$, $u^2+7u+1$, and the cusp $t=0$ to $u\in\{0,\infty\}$.

Three consequences deserve separate statement. \emph{(i) The monodromy is arithmetic.} The projective
monodromy group of $V_2$---hence of $M$, by the projective equivalence of Section~\ref{sec:orthogonal}---is the
deck group of the uniformization: \emph{exactly} the discrete arithmetic lattice
$\Gamma_0(30)^+\subset\mathrm{PSL}_2(\mathbb{R})$, of covolume $3\pi$. \emph{(ii) The twist is itself
modular.} The determinant character of Section~\ref{sec:orthogonal} is a quadratic character of
$\Gamma_0(30)^+$ trivial on $\Gamma_0(30)$; its kernel is the index-two subgroup $\Gamma_0(30){+}2,3,6$, and
the associated double cover of the $t$-line, branched at the four $\det=-1$ points
$\{\tfrac19,\tfrac15,\tfrac14,\infty\}$, has genus one by Riemann--Hurwitz---it is precisely the twist curve
$v^2=(1-4t)(1-5t)(1-9t)$ of \eqref{eq:twist}, now identified as $X(\Gamma_0(30){+}2,3,6)$. In particular
the analytic period $y_0=\Phi'/2$, pulled back by \eqref{eq:hauptt}, is a weight-two object carrying this
quadratic twist character, in the manner familiar from the modular interpretation of the Ap\'ery
numbers~\cite{Beukers1987}---though, as Section~\ref{sec:y0closed} shows, a \emph{depth-one quasimodular}
one~\cite{Zagier2008} rather than a modular form; its explicit closed form is derived there. \emph{(iii) The Varma--Monien question is resolved.} Varma and Monien judged it ``highly
suggestive that a closed form expression can be obtained'' for their threefold integral~\cite{VarmaMonien2013};
the theorem proves this in the strongest structural sense, with the closed form modular rather than elliptic:
the hyperkagome LGF lives at level $30=2\cdot3\cdot5$. To our knowledge every lattice Green's function (and
sunrise-integral period) previously realized modularly occurs at a level with at most two distinct primes---
levels $2,3,4,6$ dominate the classical cubic-lattice and Bessel-moment
literature~\cite{Broadhurst2016,BlochKerrVanhove2015}---so the hyperkagome appears to be the first lattice
Green's function realized at a level with three distinct prime factors. The
Assis--van~Hoeij--Maillard-style $\,_2F_1$ representation~\cite{AVM2016} is now explicit rather than merely
in-principle, and its algebraic pullback turns out to be solvable in radicals.

\begin{prop}[The pullback in radicals]\label{prop:pullback}
The pullback $H=1728/j$, the argument of the
$\,_2F_1\!\left(\tfrac{1}{12},\tfrac{5}{12};1;\cdot\right)$ representation, is algebraic of degree exactly
eight over $\mathbb{Q}(t)$: its primitive minimal polynomial $P_H(H,t)$ has bidegree $(8,72)$, and the
splitting field of $P_H$ is the multiquadratic extension
\begin{equation}
\mathbb{Q}(t)\Big(\sqrt{(1-t)(1-9t)},\ \sqrt{(1-t)(1-5t)},\ \sqrt{1-4t}\Big),
\label{eq:radfield}
\end{equation}
which is the full function field of $X_0(30)$. Consequently
$\mathrm{Gal}\big(P_H/\mathbb{Q}(t)\big)\cong(\mathbb{Z}/2)^3$ is the Atkin--Lehner group $\mathcal{W}$, and
the $\,_2F_1$ pullback is solvable in radicals.
\end{prop}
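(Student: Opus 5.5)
The plan is to read everything off the tower of modular curves
\[
X(\Gamma_0(30))\;\to\;X(\Gamma_0(30)^+)\;\to\;X(\mathrm{PSL}_2(\mathbb{Z})).
\]
The $j$-function is a modular function on $\Gamma_0(30)$, but it is not invariant under the Atkin--Lehner involutions. As functions of $\tau$, the images $j\circ w_e$ for $e\in\{1,2,3,5,6,10,15,30\}$ are $j(\tau),j(2\tau),j(3\tau),\dots,j(30\tau)$ up to the appropriate action. Hence $H=1728/j$ has at most eight conjugates over the field $\mathbb{Q}(t)$ of $\Gamma_0(30)^+$-invariant functions. This gives the upper bound $[\mathbb{Q}(t,H):\mathbb{Q}(t)]\le 8$. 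Equality will follow once the eight functions $j(d\tau)$, $d\mid 30$, are shown to be pairwise distinct, which is immediate from their $q$-expansions (leading terms $q^{-d}$).

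Next I will show that $\mathbb{Q}(t,H)$ is the whole function field of $X_0(30)$, namely $\mathbb{Q}(X_0(30))=\mathbb{Q}(t)(H)$. Since $\mathcal{W}$ acts faithfully on the eight cusps (simply transitively, as in the proof of Theorem~\ref{thm:modular}) and $j$ has its pole at the cusp $\infty$ only with the right width, the stabilizer of $H$ in $\mathcal{W}$ is trivial. Galois correspondence for the Galois cover $X_0(30)\to X(\Gamma_0(30)^+)$, whose deck group is $\mathcal{W}\cong(\mathbb{Z}/2)^3$ and which is defined over $\mathbb{Q}$ because the $w_e$ are, then gives that $H$ generates the full degree-eight extension. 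Consequently the splitting field of $P_H$ equals $\mathbb{Q}(X_0(30))$, and the Galois group is $\mathcal{W}$. Solvability in radicals is then automatic, since $(\mathbb{Z}/2)^3$ is abelian of exponent two and the extension is multiquadratic by Kummer theory.

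To identify the three square roots explicitly, I will use the intermediate quadratic subfields, which correspond to the seven index-two subgroups containing $\Gamma_0(30)$. Three are already available from earlier. First, $\Gamma_0(30){+}6,10,15$ has generator $u$ with $u^2-(1/t-7)u+1=0$, whose discriminant is $(1-9t)(1-5t)/t^2$. This gives $\sqrt{(1-5t)(1-9t)}$. Second, $\Gamma_0(30){+}2,3,6$ is the twist curve, giving $\sqrt{(1-4t)(1-5t)(1-9t)}$. Third, the product of these two gives $\sqrt{1-4t}$. For the remaining generator I will identify the double cover branched at $\{\tfrac19,1\}$. This should be $\Gamma_0(30){+}\dots$ with $w_e$ acting as $-1$ on a suitable eta quotient; I will locate it by computing, via Ligozat, which Atkin--Lehner involutions invert eta quotients such as $(\eta(\tau)\eta(15\tau)/\eta(3\tau)\eta(5\tau))^k$ and matching branch points. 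The three radicands of \eqref{eq:radfield} then generate the same group in $\mathbb{Q}(t)^*/\mathbb{Q}(t)^{*2}$ as these.

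\textbf{Main obstacle: the exact bidegree $(8,72)$.} The $t$-degree $72$ of $P_H$ is the degree of $H$ as a function on $X_0(30)$, which is the index $72$ from Theorem~\ref{thm:modular}, since $j$ has degree $[\mathrm{PSL}_2(\mathbb{Z}):\Gamma_0(30)]$. This forces the $t$-degree of $P_H$ to be $72$. Primitivity will need a check. To pin down the exact polynomial and certify it, I would compute the elementary symmetric functions of the eight $H\circ w_e$ as $q$-series in $t$. Each is a rational function of $t$ whose degree is bounded a priori by the pole count. I would then certify by a valence-bound argument of the same kind as in the proof of Theorem~\ref{thm:modular}: compare enough $q$-coefficients, beyond the total degree of the coefficients.
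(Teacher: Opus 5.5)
Your structural argument is sound, and it takes a more conceptual route than the paper.

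\begin{itemize}
\item Writing $w_e=\gamma\,\mathrm{diag}(e,1)$ with $\gamma\in\mathrm{SL}_2(\mathbb{Z})$ gives $j\circ w_e=j(e\tau)$ exactly. The leading terms $q^{-e}$ make the eight conjugates distinct.
\item Hence the stabilizer of $H$ in $\mathcal{W}$ is trivial, and $\mathbb{Q}(t)(H)=\mathbb{Q}(X_0(30))$ is Galois over $\mathbb{Q}(t)$ with group $\mathcal{W}$.
\item Your cusp-based justification is misstated. $j$ has poles at all eight cusps, of order $30/d$ at the cusp of denominator $d$; what works is that these orders are distinct.
\item The bidegree is not an obstacle. $P_H$ is absolutely irreducible, so $\deg_H P_H=\deg t=8$ and $\deg_t P_H=\deg H=72$ on $X_0(30)$, with no explicit polynomial needed.
\end{itemize}

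The paper instead certifies an explicit $P_H$, through $q^{1200}$ against a polar-degree bound of $1152$, together with van Hoeij's explicit radical root of it. It reads the radicands off that root and uses Artin's theorem to identify the field with $\mathbb{Q}(X_0(30))$.

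The gap is the part the paper gets from that explicit root: identifying the splitting field with \eqref{eq:radfield}. You have two layers. The $u$-layer $\sqrt{(1-5t)(1-9t)}$ is proved. The twist layer is only asserted in consequence (ii) of Section~\ref{sec:modular}, without independent proof; in fact it is a corollary of this proposition. Together they generate the layers with branch loci $\emptyset$, $\{\tfrac19,\tfrac15\}$, $\{\tfrac14,\infty\}$ and $\{\tfrac19,\tfrac15,\tfrac14,\infty\}$. The remaining four layers form a single coset, and two cosets are consistent with everything you state:
\begin{itemize}
\item the coset of $\{1,\tfrac19\}$, which gives \eqref{eq:radfield};
\item the coset of $\{1,\tfrac14\}$, which gives $\sqrt{(1-t)(1-4t)}$ and a different field.
\end{itemize}
Setting out to find ``the double cover branched at $\{\tfrac19,1\}$'' presupposes which one occurs. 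Your sample eta quotient also cannot settle it. It has level $15$, and $w_2$ sends it to its value at $2\tau$, which doubles its leading exponent. So it is not a $\mathcal{W}$-eigenfunction and does not cut out a quadratic layer.

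What you need is which orbifold points share an Atkin--Lehner stabilizer. One way is the class-number count of fixed points on $X_0(30)$: $8$ for $w_{15}$, $4$ each for $w_5,w_6,w_{30}$, and none for $w_2,w_3,w_{10}$, at $4$ per orbifold point. This shows that only $w_{15}$ stabilizes two of the five points. Then:
\begin{itemize}
\item the $u$-layer puts the stabilizers of $\tfrac19$ and $\tfrac15$ outside $\{1,w_6,w_{10},w_{15}\}$;
\item the twist layer is unbranched only at $t=1$, so $t=1$ cannot belong to the $w_{15}$-pair;
\item hence $\{\tfrac14,\infty\}$ is that pair and $t=1$ is the $w_6$-point, which selects the coset of $\{1,\tfrac19\}$.
\end{itemize}
Alternatively, exhibit a genuine $\mathcal{W}$-eigenfunction for the missing layer and certify its square as a rational function of $t$ by a valence bound.

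Finally, to pass from branch loci to radicands over $\mathbb{Q}(t)$ rather than $\mathbb{C}(t)$, you also need the constant factors to be rational squares. This holds because $\mathbb{Q}(X_0(30))\subset\mathbb{Q}((q))$ with $t=q+O(q^2)$. If $c\prod(1-a_it)$ has a square root in $\mathbb{Q}(X_0(30))$, then $c$ is the square of that root's constant term.
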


The product of the three radicands is $\big((1-t)\,v\big)^2$:
the determinant-character twist $v$ of \eqref{eq:twist} is, up to sign and the rational factor $1-t$, the
product of the three quadratic layers of the radical tower.%
\footnote{For the equivalent base $\,_2F_1\!\left(\tfrac18,\tfrac38;1;\cdot\right)$ the pullback satisfies a
smaller equation, of bidegree $(4,24)$, whose two radicands $(1-t)(1-5t)$ and $(1-t)(1-4t)(1-9t)$ generate an
index-two subfield of \eqref{eq:radfield}; its analytic branch
$\widetilde H=256\,t-25600\,t^2+\cdots$ is verified against the certified $Q_V$ through the Schwarzian pullback identity to
order $t^{96}$.}

\begin{proof}
The coefficients of $P_H$, an explicit radical expression for $H$ due to
M.~van~Hoeij (private communication, July 2026), and the machine certificate are public
(\texttt{numerics/\allowbreak pullback\_\allowbreak data.json},
\texttt{numerics/\allowbreak certify\_\allowbreak pullback.py}). The certificate proves four
exact statements. (i)~$P_H(H,t)=0$ for the modular $H$: the identity is verified on the $q$-expansions
through $q^{1200}$, and this is a proof rather than a series check because $H$ lies in the function field of
$X_0(30)$, of degree $|\mathcal{W}|=8$ over $\mathbb{Q}(t)$ (quotient covolume ratio $24\pi/3\pi$), with pole
divisor of degree $72$: $j$ has degree $[\mathrm{PSL}_2(\mathbb{Z}):\Gamma_0(30)]=72$, and $\nu_3(30)=0$ means
that no point above $j=0$ retains an order-three stabilizer, so all $72/3=24$ of them are ramified of index
three and $H=1728/j$ has $24$ triple poles. Hence a nonzero polynomial of bidegree at most $(8,72)$ evaluated
on $H$ has polar degree at most $8\cdot72+72\cdot8=1152$, and since a nonzero function on a compact curve has
as many zeros as poles, its order of vanishing at the cusp---where $q$ is a uniformizer, $\Gamma_0(30)$ having
width one at $\infty$---is at most $1152<1200$. (ii)~The radical expression is an exact root of
$P_H$: a polynomial identity in the multiquadratic algebra, in exact integer arithmetic, no series involved.
(iii)~The seven subset products of the radicands are non-squares, so \eqref{eq:radfield} has degree eight
over $\mathbb{Q}(t)$. (iv)~The eight sign-conjugates of the radical expression are pairwise distinct, and the
branch with signs $(-,-,-)$ matches the $t$-expansion of the modular $H$; hence $H$ has degree exactly eight,
$P_H$ is irreducible and is the minimal polynomial, and $\mathbb{Q}(t)(H)$ is all of \eqref{eq:radfield}.
That field is $\mathbb{Q}(X_0(30))$: by Theorem~\ref{thm:modular} $\mathbb{Q}(t)$ is the fixed field of
$\mathcal{W}$ acting on $\mathbb{Q}(X_0(30))$, so $[\mathbb{Q}(X_0(30)):\mathbb{Q}(t)]=|\mathcal{W}|=8$ by
Artin's theorem, while $\mathbb{Q}(t)(H)\subseteq\mathbb{Q}(X_0(30))$ because $H=1728/j$ is a modular function
for $\Gamma_0(30)$; equal degrees force equality, and $\mathrm{Gal}=\mathcal{W}$. As a consistency check,
Riemann--Hurwitz for the $(\mathbb{Z}/2)^3$-cover \eqref{eq:radfield} of the $t$-line, branched over the five
order-two orbifold points---$\{1,\tfrac19\}$, $\{1,\tfrac15\}$ and $\{\tfrac14,\infty\}$ through the three
quadratic layers---gives four double points over each, hence $2g-2=8\cdot(-2)+5\cdot4$, i.e.\ $g=3$, the
classical genus of $X_0(30)$. The ramification index is two even at $t=1$, where two of the radicands vanish
simultaneously: their product $(1-t)^2(1-5t)(1-9t)$ has \emph{even} valuation there, so of the seven nontrivial
characters exactly four are ramified and the unramified ones form a subgroup of order four; the inertia group is
generated by the simultaneous sign flip and has order two, not four.
\end{proof}

The representation \eqref{eq:etaquot}--\eqref{eq:hauptt} remains the closed form in its most natural
coordinates; the proposition shows what replaces rationality of the pullback: an abelian $(\mathbb{Z}/2)^3$
radical tower whose quadratic layers are cut out by the Atkin--Lehner involutions, with the
determinant-character twist $v$ appearing as their product.

Two curves carry the word ``modular'' in this paper and should not be conflated. The curve
uniformized by $V_2$ is $X(\Gamma_0(30)^+)\cong\mathbb{P}^1_t$, of genus \emph{zero}. The algebraic
relation $P_H(H,t)=0$ of Proposition~\ref{prop:pullback} is a plane model of $X_0(30)$, of genus
\emph{three}: that curve has index $72$, no elliptic points and eight cusps, so
$g=1+\tfrac{72}{12}-\tfrac82=3$. Its bidegree $(8,72)$ is precisely the pair of degrees
$(\deg t,\deg H)=(|\mathcal{W}|,72)$ of the two coordinate functions on that curve, the second
because $\deg H=\deg j=[\mathrm{PSL}_2(\mathbb{Z}):\Gamma_0(30)]=72$; and the eightfold
Atkin--Lehner covering $X_0(30)\to X(\Gamma_0(30)^+)$ is what separates the two.

\section{The weight-two period in closed form}
\label{sec:y0closed}

The weight-two period $y_0=\Phi'/2$ itself admits an explicit closed form in the modular data of
Section~\ref{sec:modular}. Write $W=q\,\rmd t/\rmd q$ for the nome derivative of the modular function $t$ of
\eqref{eq:hauptt} (so that $W$ is a constant multiple of $w_1^2$, with $w_1=(\rmd t/\rmd\tau)^{1/2}$ a
solution of the projective normal form $w''+Q_Vw=0$ of \eqref{eq:QV}), put $W'=\rmd W/\rmd t$, and let
$v$ be the twist unit of \eqref{eq:twist} in the branch $v(0)=+1$.

\begin{thm}\label{thm:y0}
With this notation, the weight-two period of the hyperkagome lattice Green's function is
\begin{equation}
y_0=\frac{\rho_0(t)\,W+\rho_1(t)\,W'}{v},
\label{eq:y0closed}
\end{equation}
\begin{equation}
\begin{split}
\rho_1(t)&=\frac{15t^2+17t-8}{30\,t(t-1)},\\[4pt]
\rho_0(t)&=-\,\frac{4050t^6+2445t^5-11436t^4+8000t^3-2130t^2+231t-8}
{30\,t^2(t-1)^2(4t-1)(5t-1)(9t-1)}.
\end{split}
\label{eq:y0rhos}
\end{equation}
\end{thm}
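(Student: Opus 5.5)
The plan is to prove \eqref{eq:y0closed} in three layers. First, show that the right-hand side is annihilated by $M$. Second, show it is the analytic solution. Third, fix the normalization by comparing a few series coefficients.

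\emph{Step 1: $W$ solves $\mathrm{Sym}^2$ of the normal form.} By Theorem~\ref{thm:modular}, $\tau$ is a ratio of solutions of $w''+Q_Vw=0$. Hence $w_1=(\rmd t/\rmd\tau)^{1/2}$ is a solution, and $W=q\,\rmd t/\rmd q=(2\pi i)^{-1}\rmd t/\rmd\tau$ is a constant multiple of $w_1^2$. It is therefore annihilated by $\mathrm{Sym}^2(V_2^{\rm norm})=\rmd^3/\rmd t^3+4Q_V\,\rmd/\rmd t+2Q_V'$. This is the unimodular operator of Remark~\ref{rem:obstruction}, and it is irreducible by Lemma~\ref{lem:sl2}. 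Moreover, $W$ is the solution that is holomorphic at the cusp $t=0$, since $W=t+O(t^2)$ there.

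\emph{Step 2: the intertwiner.} Remark~\ref{rem:obstruction} says that, after conjugating by $v$, the operator $T=\rho_0+\rho_1\,\rmd/\rmd t$ is a homomorphism from $\mathrm{Sym}^2(V_2^{\rm norm})$ to $M_v=v\circ M\circ v^{-1}$. Concretely, this is the exact operator identity
\[
M_v\circ T\equiv 0 \pmod{\mathrm{Sym}^2(V_2^{\rm norm})}
\]
as a right remainder. I would verify this identity (step [IV]) in $\mathbb{Q}(t)\langle\rmd/\rmd t\rangle$, building $M_v$ by expanding $(\rmd/\rmd t-v'/v)^i$ so that the computation never leaves $\mathbb{Q}(t)$. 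Given the identity, $T(W)$ is a solution of $M_v$, and so $T(W)/v$ is a solution of $M$. To confirm that $T$ is injective on solutions, I would check that $\gcd$-right of $T$ and $\mathrm{Sym}^2$ is trivial, i.e.\ that $T$ has no common solution with $\mathrm{Sym}^2(V_2^{\rm norm})$ (this is \eqref{eq:Einj}). This also follows abstractly: both operators are irreducible of order three, so a nonzero homomorphism between them is an isomorphism.

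\emph{Step 3: identifying $y_0$.} Near $t=0$ the exponents of $M$ are $\{-1,-1,0\}$, and by the MUM analysis exactly one solution is analytic there, namely $y_0$, up to scale. The candidate $T(W)/v$ is a Laurent series in $t$ with no logarithm, because $W$, $v$ and the $\rho_i$ are all meromorphic at $0$. I must also exclude a $1/t$ head. By \eqref{eq:y0rhos}, $\rho_1\sim 8/(30t)$ and $\rho_0\sim -8/(30t^2)$, so with $W=t+ct^2+\cdots$ the $1/t$ terms cancel. The resulting log-free Laurent solution lies in the span of $y_0$ and of the Laurent heads of $f_1,f_2$. I would show it equals $y_0$ by matching the first few coefficients against $\Phi'/2$. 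The cleanest way is to note that the local log-free solution space at the MUM point is one-dimensional (only one Frobenius solution is log-free). Then a single coefficient fixes the multiple, and a couple more coefficients serve as a check.

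I expect the main obstacle to be Step 2. The operator identity is large: $M$ has degree $15$, and $\rho_0$ has degree-$9$ denominators. It must be checked exactly, with correct handling of the $v$-conjugation and its branch $v(0)=1$. If the direct right-division proves unwieldy, I would instead apply a Fuchs-budget argument of the kind used for \eqref{eq:R}. The difference between $T(W)/v$ and $y_0$ is annihilated by a Fuchsian operator of bounded order and bounded exponents, so its vanishing order at an ordinary point is a priori bounded. Series agreement to order about $100$ then forces the difference to vanish identically, much as the pole-degree bound does in the proof of Theorem~\ref{thm:modular}.
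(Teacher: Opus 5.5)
Your proposal is correct and follows essentially the same route as the paper's proof. The Schwarzian identity puts $W$ in the kernel of $\widetilde N=\rmd^3/\rmd t^3+4Q_V\,\rmd/\rmd t+2Q_V'$; the exact right-division identity $M_v\,T=Y\,\widetilde N$ carries it to a solution $T(W)$ of $M_v$; and since the single-valued solution line at the MUM point $t=0$ is one-dimensional (with $v(0)=1$), one coefficient match gives $y_0v=T(W)$.

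Your injectivity check via \eqref{eq:Einj}, the explicit cancellation of the $1/t$ head, and the Fuchs-budget fallback are harmless extras that the identity does not need, because any nonzero single-valued local solution is automatically a multiple of the holomorphic $y_0$.
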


Equivalently, cleared of all denominators,
\begin{equation}
B(t)\,y_0\,v\,W=A_0(t)\,W^2+A_1(t)\,q\frac{\rmd W}{\rmd q},
\label{eq:y0poly}
\end{equation}
with $B=-30\,t^2(t-1)^2(4t-1)(5t-1)(9t-1)$,
$A_0=4050t^6+2445t^5-11436t^4+8000t^3-2130t^2+231t-8$, and
$A_1=-\,t(t-1)(4t-1)(5t-1)(9t-1)(15t^2+17t-8)$, using $q\,\rmd W/\rmd q=WW'$. Every ingredient
is an explicit level-30 modular or quasimodular object: with $u$ the eta quotient \eqref{eq:etaquot},
$E_2(\tau)=1-24\sum_{n\ge1}\sigma_1(n)q^n$, and the Eisenstein-series form of the $\eta$-logarithmic
derivative
\begin{equation}
\begin{split}
\ell=q\frac{\rmd\log u}{\rmd q}=\tfrac18\big[&E_2(\tau)+6E_2(6\tau)+10E_2(10\tau)+15E_2(15\tau)\\
&-2E_2(2\tau)-3E_2(3\tau)-5E_2(5\tau)-30E_2(30\tau)\big],
\end{split}
\label{eq:ell}
\end{equation}
one has $W=(1-u^2)\,u\,\ell/(u^2+7u+1)^2$ and $v=(u^2-1)\sqrt{u^2+3u+1}/(u^2+7u+1)^{3/2}$.

\begin{proof}[Proof sketch]
The identity is proved, not fitted, in five steps.
[I] By the Schwarzian identity \eqref{eq:schwarz}, $w_1=(\rmd t/\rmd\tau)^{1/2}$ solves
$w''+Q_Vw=0$, so $W$ solves its symmetric square.
[II] Elementarily, $y''=-Q_Vy$ forces $z=y^2$ to satisfy
$\widetilde N(z):=z'''+4Q_Vz'+2Q_V'z=0$; hence $\widetilde N(W)=0$.
[III] Conjugating the hyperkagome operator by the twist unit, $M_v=v\circ M\circ v^{-1}$ has
coefficients in $\mathbb{Q}(t)$ (the logarithmic derivative $v'/v$ is rational) and annihilates
$y_0v$.
[IV] The first-order operator $T=\rho_0+\rho_1\,\rmd/\rmd t$ satisfies the exact operator identity
$M_v\,T=Y\,\widetilde N$ in the ring $\mathbb{Q}(t)\langle\rmd/\rmd t\rangle$ of linear differential
operators over $\mathbb{Q}(t)$, with right remainder
identically zero and $Y$ of order one whose top-order coefficient carries the
apparent-singularity polynomial $p_7$; therefore $T$ maps solutions of $\widetilde N$ to
solutions of $M_v$, and $T(W)$ solves $M_v$.
[V] $T(W)$ is single-valued near $t=0$ ($W$ and $W'$ are single-valued functions of $t$ there,
since $t(q)=q+O(q^2)$ is invertible), and exactly one of the three local solutions of $M$ at the
maximal-unipotent point $t=0$ is log-free (Section~\ref{sec:orthogonal}); since $v(0)=1\neq0$,
conjugation by $v$ preserves this local structure, so the single-valued solution line of $M_v$
at $t=0$ is one-dimensional and spanned by $y_0v$, and matching a single
series coefficient gives $y_0v=T(W)$. As an independent check, \eqref{eq:y0closed} is verified as
an exact $q$-series through order $q^{120}$, assembled entirely from the raw $\eta$ and $E_2$
definitions above.
\end{proof}

Three structural remarks follow. \emph{(i) $y_0$ is quasimodular, not modular.} It is a
weight-two \emph{depth-one} quasimodular form on $\Gamma_0(30)^+$ carrying the orthogonal
determinant character of Section~\ref{sec:orthogonal}; equivalently it lives on the genus-one twist
double cover $X(\Gamma_0(30){+}2,3,6)$ of \eqref{eq:twist}. \emph{(ii) The derivative term is
essential.} In the minimal-degree representation \eqref{eq:y0poly}---unique, as the fitting
nullspace is one-dimensional---the coefficient $A_1$ is not identically zero: the term
$\rho_1W'$ is exactly the quasimodular depth introduced by inverting the order-two intertwiner
of Section~\ref{sec:orthogonal}. A depth-zero \emph{rational} form $y_0v=R(t)\,W$ is excluded
outright by local exponents: at the elliptic point $t=\tfrac19$ both $W$ and the twist unit $v$
vanish to order $\tfrac12$, while $y_0$ carries the exponent $-\tfrac12$ of $M$'s Riemann scheme
there (the square-root van Hove singularity, $[t^m]\,y_0\sim9^m m^{-1/2}$), so $y_0v$ is regular
at $t=\tfrac19$ whereas $R(t)\,W$ has half-integer order for every rational $R$. In fact the exclusion is categorical
(Theorem~\ref{thm:noalg} below): no
algebraic factor, at any weight and any level, can absorb the derivative term. This rests on
one group-theoretic fact.

\begin{lem}\label{lem:sl2}
The differential Galois group $\mathcal{G}$ of $w''+Q_Vw=0$ over
$\mathbb{C}(t)$ is the full $\mathrm{SL}(2,\mathbb{C})$.
\end{lem}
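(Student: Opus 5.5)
The plan is to reduce the lemma to a statement about monodromy and then use the uniformization of Theorem~\ref{thm:modular}. The operator $w''+Q_Vw=0$ is in projective normal form: the coefficient of $w'$ vanishes, so the Wronskian is constant. Hence $\mathcal{G}\subseteq\mathrm{SL}(2,\mathbb{C})$.

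By Theorem~\ref{thm:modular}, \eqref{eq:QV} shows that the operator is Fuchsian, with regular singular points only at $t=0,\tfrac19,\tfrac15,\tfrac14,1,\infty$. The density theorem then says that $\mathcal{G}$ is the Zariski closure of the monodromy group $\Gamma\subset\mathrm{SL}(2,\mathbb{C})$. It therefore suffices to show that this closure is all of $\mathrm{SL}(2,\mathbb{C})$.

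The key input is consequence (i) after Theorem~\ref{thm:modular}. There the ratio of two solutions is the inverse of $t(\tau)$, and the projective monodromy group is the Fuchsian group $\Gamma_0(30)^+\subset\mathrm{PSL}_2(\mathbb{R})$ of finite covolume $3\pi$. In particular, the image $\bar\Gamma$ of $\Gamma$ in $\mathrm{PSL}(2,\mathbb{C})$ is Zariski dense in $\mathrm{PSL}(2,\mathbb{C})$. I would prove this density by the standard classification of algebraic subgroups of $\mathrm{PSL}(2,\mathbb{C})$: a proper algebraic subgroup is finite, or contained in the normalizer of a torus (dihedral type), or contained in a Borel subgroup (triangular). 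I rule out each case in turn:
\begin{itemize}
\item \emph{Finite.} $\bar\Gamma$ is infinite, since it contains the parabolic $\tau\mapsto\tau+1$ coming from the cusp $t=0$. This also matches the logarithm, i.e.\ the exponent difference $0$, at the MUM point.
\item \emph{Borel.} $\bar\Gamma$ would then fix a point of $\mathbb{P}^1(\mathbb{C})$. But a lattice in $\mathrm{PSL}_2(\mathbb{R})$ contains hyperbolic elements with distinct fixed points, so it fixes no point of $\mathbb{P}^1(\mathbb{C})$.
\item \emph{Dihedral.} Then $\bar\Gamma$ has an abelian subgroup of index at most two, so $\Gamma_0(30)$ would be virtually abelian. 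This is impossible for a lattice: for example, it contains a free group on two hyperbolic generators, or one can argue that a virtually abelian lattice cannot have finite covolume.
\end{itemize}
The Borel case can alternatively be excluded directly by the Kovacic criterion: a rational-logarithmic-derivative solution would need exponents summing appropriately, and I expect the heads $\tfrac3{16}$ and $\tfrac14$ to preclude it. I would rely on the group argument instead.

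To lift from projective to linear, the Zariski closure $\mathcal{G}$ of $\Gamma$ surjects onto the closure of $\bar\Gamma$, which is $\mathrm{PSL}(2,\mathbb{C})$. So $\mathcal{G}\subseteq\mathrm{SL}(2,\mathbb{C})$ has image all of $\mathrm{PSL}(2,\mathbb{C})$. Its identity component maps onto the connected group $\mathrm{PSL}(2,\mathbb{C})$, so it is three-dimensional and hence equal to $\mathrm{SL}(2,\mathbb{C})$. Therefore $\mathcal{G}=\mathrm{SL}(2,\mathbb{C})$.

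The main obstacle is making rigorous the identification of the projective monodromy with $\Gamma_0(30)^+$. This requires that the solution ratio is a single-valued inverse of the Hauptmodul, which is the content of \eqref{eq:schwarz} together with the fact that $t$ is a degree-one generator. Concretely, I would take $w_1=(\rmd t/\rmd\tau)^{1/2}$ and $w_2=\tau w_1$, note that by \eqref{eq:schwarz} both solve the equation, and observe that analytic continuation along loops in $\mathbb{P}^1_t$ minus the six points acts on $(w_2:w_1)=\tau$ by the deck transformations of $\mathfrak{H}\to X(\Gamma_0(30)^+)$. Only the density of this discrete, non-elementary group is needed; arithmeticity is not used.
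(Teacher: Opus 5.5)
Your proof is correct, but it is not the paper's main argument. It is the alternative the paper records only in a closing parenthesis: the projective monodromy is $\Gamma_0(30)^+$, which is Zariski dense in $\mathrm{PSL}(2,\mathbb{C})$, and the Galois group of a Fuchsian operator is the Zariski closure of its monodromy. You supply the details the paper leaves implicit. These are the identification of the monodromy of $\tau=w_2/w_1$ with the deck group of the upper half-plane over $X(\Gamma_0(30)^+)$, the finite/Borel/dihedral case analysis, and the lift from $\mathrm{PSL}_2$ to $\mathrm{SL}_2$ by a dimension count on the identity component.

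The paper's primary proof never invokes Theorem~\ref{thm:modular}. It uses the exact operator identity $M_vT=Y\widetilde N$ of step [IV] and the nonvanishing of \eqref{eq:Einj}. Together these show that $T$ is a Galois-equivariant bijection $\mathrm{Sol}(\widetilde N)\to\mathrm{Sol}(M_v)$. Hence the symmetric square of $\mathcal{G}$ has identity component $\mathrm{Gal}(M_v)^\circ=\mathrm{Gal}(M)^\circ=\mathrm{SO}(3,\mathbb{C})$. A proper $\mathcal{G}$ would have solvable identity component, contradicting the simplicity of $\mathrm{SO}(3,\mathbb{C})$.

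That route is independent of the Schwarzian identity and of the $q$-series and degree-bound argument behind it. The lemma then rests only on the orthogonal structure of $M$ from Section~\ref{sec:orthogonal}. Its by-product, the bijectivity of $T$, is reused in the paper to prove that all local solutions of $M$ on the $p_7$ locus are analytic.

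Your route is intrinsic to $w''+Q_Vw=0$ and needs nothing about $M$, $T$ or $p_7$. It also gives a stronger conclusion: the projective monodromy group is exactly the lattice $\Gamma_0(30)^+$, not merely a group with full Zariski closure. Its price is complete reliance on Theorem~\ref{thm:modular}.
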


\begin{proof}
The Wronskian is constant, so $\mathcal{G}\subseteq\mathrm{SL}(2,\mathbb{C})$.
The solution space of $\widetilde N$ is spanned by the products $w_iw_j$ of solutions of
$w''+Q_Vw=0$~\cite{vdPutSinger}, and by the operator identity of step [IV] the order-one
operator $T=\rho_0+\rho_1\,\rmd/\rmd t$ maps $\mathrm{Sol}(\widetilde N)$ into
$\mathrm{Sol}(M_v)$. Any kernel element of $T$ is a constant multiple of
$z=\exp(-\!\int\!\rho_0/\rho_1)$, and the exact rational computation (with
$r=z'/z=-\rho_0/\rho_1$)
\begin{equation}
\begin{split}
\frac{\widetilde N(z)}{z}&=r''+3rr'+r^3+4Q_Vr+2Q_V'\\[2pt]
&=\frac{-30\,p_7(t)}{t^2\,(t-1)(4t-1)(5t-1)(9t-1)(15t^2+17t-8)^3}
\end{split}
\label{eq:Einj}
\end{equation}
---the apparent-singularity polynomial once more, and manifestly nonzero---shows
$z\notin\mathrm{Sol}(\widetilde N)$. Hence $T$ is a bijection
$\mathrm{Sol}(\widetilde N)\to\mathrm{Sol}(M_v)$; being defined over $\mathbb{Q}(t)$, it is
equivariant for the differential Galois action, computed in a Picard--Vessiot field containing
both solution spaces (whose Galois group restricts \emph{onto} the Galois group of each
operator~\cite{vdPutSinger}), so it identifies the Galois representation on
$\mathrm{Sol}(M_v)$ with the symmetric square of the standard representation of $\mathcal{G}$.
Now $\mathrm{Gal}(M_v)^\circ=\mathrm{Gal}(M)^\circ=\mathrm{SO}(3,\mathbb{C})$
(Section~\ref{sec:orthogonal}): a finite extension of the base field replaces the Galois group by
a closed subgroup of finite index, which shares its identity component~\cite{vdPutSinger}---
applied once to $M$ and once to $M_v$ across $\mathbb{C}(t)\subset\mathbb{C}(t,v)$---and over
$\mathbb{C}(t,v)$ multiplication by the base-field element $v$ identifies
$\mathrm{Sol}(M_v)=v\,\mathrm{Sol}(M)$ equivariantly, with identical matrices. If $\mathcal{G}$ were a
proper closed subgroup of $\mathrm{SL}(2,\mathbb{C})$, its identity component would be
solvable (it lies in a Borel subgroup or in the normalizer of a torus, or is
finite~\cite{Kovacic1986,vdPutSinger}), and then so would the identity component
$\mathrm{Sym}^2(\mathcal{G})^\circ\cong\mathrm{Gal}(M_v)^\circ$ of its symmetric
square---contradicting the non-solvability of the simple group $\mathrm{SO}(3,\mathbb{C})$.
Hence $\mathcal{G}=\mathrm{SL}(2,\mathbb{C})$. (Independently,
Theorem~\ref{thm:modular} identifies the projective monodromy of $w''+Q_Vw=0$ with the lattice
$\Gamma_0(30)^+$, which is Zariski-dense in $\mathrm{PSL}(2,\mathbb{C})$, and the Galois group
of a Fuchsian operator is the Zariski closure of its monodromy; either route suffices.)
\end{proof}

\begin{thm}[No modular-times-algebraic form at any weight]\label{thm:noalg}
Let $F\not\equiv0$ be a
meromorphic modular form of any weight $k\in\tfrac12\mathbb{Z}$, with any finite-order
multiplier system, on any subgroup $\Gamma'\subset\mathrm{PSL}(2,\mathbb{R})$ commensurable
with $\Gamma_0(30)^+$, and let $a\neq0$ be any function algebraic over $\mathbb{C}(t)$. Then
$y_0\neq F\,a$ (as functions of $\tau$ under the parametrization \eqref{eq:hauptt}). In
particular $y_0$ is not a pure eta quotient, of any level, nor a (weight-two modular
form)$\times$(algebraic function): every $\Gamma_0(N)$ and its Atkin--Lehner extensions are
commensurable with $\Gamma_0(30)^+$, and the eta multiplier has finite order (dividing
$24$)~\cite{DiamondShurman,Ligozat1975}.
\end{thm}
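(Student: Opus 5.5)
We argue by contradiction, reducing any putative identity $y_0=F\,a$ to a Liouvillian solution of $w''+Q_Vw=0$, which Lemma~\ref{lem:sl2} forbids. The first step normalizes $F$ against the weight-two object $W=q\,\rmd t/\rmd q=\frac{1}{2\pi i}\,\rmd t/\rmd\tau$. By Theorem~\ref{thm:modular}, $W$ is a meromorphic modular form of weight two on $\Gamma_0(30)^+$, possibly with a finite-order character, and $W\not\equiv0$.

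Given $F$ of weight $k\in\tfrac12\mathbb{Z}$ on $\Gamma'$ with a finite-order multiplier, choose $N\ge1$ such that $Nk/2\in\mathbb{Z}$ and such that the $N$-th powers of both multiplier systems are trivial. The quotient $g=F^N/W^{Nk/2}$ is then invariant under $\Gamma'\cap\Gamma_0(30)^+$. This group has finite index in $\Gamma_0(30)^+$ because the two groups are commensurable. Moreover $g$ is meromorphic at the cusps, since both $F$ and $W$ are. So $g$ is a meromorphic function on a finite cover of $X(\Gamma_0(30)^+)\cong\mathbb{P}^1_t$, and is therefore algebraic over $\mathbb{C}(t)$. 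Hence $F=W^{k/2}b$ with $b=g^{1/N}$ algebraic over $\mathbb{C}(t)$.

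Next feed this into Theorem~\ref{thm:y0}. Multiplying $y_0=Fa$ by $v$ gives $\rho_0W+\rho_1W'=y_0v=W^{k/2}c$, where $c=abv$ is algebraic and nonzero. Set $w_1=W^{1/2}$; up to a constant this is a solution of $w''+Q_Vw=0$, by step [I]. Since $W'/W=2w_1'/w_1$ and $\rho_1\neq0$, dividing by $W$ yields
\[
\frac{w_1'}{w_1}=\alpha\,w_1^{\,m}+\beta,\qquad m=k-2,\quad \alpha=\frac{c}{2\rho_1},\quad \beta=-\frac{\rho_0}{2\rho_1},
\]
with $\alpha,\beta$ algebraic over $\mathbb{C}(t)$.

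If $m=0$, then $w_1'/w_1$ is algebraic. Hence $w_1$ is Liouvillian over $\mathbb{C}(t)$: it is the exponential of an integral of an algebraic function. If $m\neq0$, put $s=w_1^{-m}$. Then $s'=-m(\alpha+\beta s)$, a first-order linear inhomogeneous equation over the algebraic closure $K$ of $\mathbb{C}(t)$. Such an equation is solved by one exponential of an integral and one further integral, and $w_1=s^{-1/m}$ is an algebraic function of $s$. So in every case $w_1$ is a nonzero Liouvillian solution of $w''+Q_Vw=0$.

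By Kovacic's theory, an irreducible second-order operator with Galois group $\mathrm{SL}(2,\mathbb{C})$ has no nonzero Liouvillian solutions. Equivalently, the existence of one nonzero Liouvillian solution forces the identity component of the Galois group to be solvable. This contradicts Lemma~\ref{lem:sl2}. Hence $y_0\neq Fa$.

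The "in particular" clauses then follow from two facts. Every $\Gamma_0(N)$, and every extension of it by Atkin--Lehner involutions, is commensurable with $\Gamma_0(30)^+$. Eta quotients carry multipliers of order dividing $24$.

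The step needing the most care is the first one, that $F/W^{k/2}$ is algebraic over $\mathbb{C}(t)$. Three points must be checked there:
\begin{itemize}
\item the finite-order multipliers and half-integral weights are cleared by the single power $N$;
\item meromorphy at every cusp of the finite-index intersection holds, so the quotient is a genuine rational function on a compact curve;
\item the identity $y_0=Fa$ is understood on a common simply connected domain in $\tau$, so that the branches of $a$, $b$ and $W^{1/2}$ can be fixed coherently.
\end{itemize}
The final Liouvillian step is standard once Lemma~\ref{lem:sl2} is in hand.
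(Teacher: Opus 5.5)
Your proof is correct. It shares its first half with the paper's argument: both clear weight and multiplier with a single power, so that $F^N/W^{Nk/2}$ is a weight-zero function on $\Gamma'\cap\Gamma_0(30)^+$ and hence algebraic over $\mathbb{C}(t)$, and both then substitute this into the closed form \eqref{eq:y0closed}.

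The endgames are genuinely different. The paper works inside the Picard--Vessiot field $K$ of $w''+Q_Vw=0$:
\begin{itemize}
\item it writes $(\rho_0+\rho_1\mathcal{L})^n=\mathcal{A}\,W^{n(k-2)/2}$;
\item it uses connectedness of $\mathrm{SL}(2,\mathbb{C})$ to force the algebraic factor $\mathcal{A}$ into $\mathbb{C}(t)$;
\item it acts with the torus to force $k=2$;
\item it then uses the unipotent elements to produce infinitely many distinct roots of $X^n-\mathcal{A}$ in $K$.
\end{itemize}
You instead read the relation as a Bernoulli equation $w_1'=\beta w_1+\alpha w_1^{m+1}$ for $w_1=W^{1/2}$, with algebraic coefficients. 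You linearize it by $s=w_1^{-m}$ and conclude that $w_1$ is a nonzero Liouvillian solution. Lemma~\ref{lem:sl2} forbids this, via Kovacic or via reduction of order plus solvability of the identity component of a Picard--Vessiot group contained in a Liouvillian tower.

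What your route buys is brevity and uniformity in $k$. You need no torus step to pin down the weight, and no preliminary argument that $\mathcal{A}$ is rational, which removes any appeal to the algebraic closure of $\mathbb{C}(t)$ in $K$.

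Its price is that you work in a concrete differential field of meromorphic functions on a simply connected domain. There the branches of $a$, $b$, $W^{1/2}$ and $w_1^{-m}$ must be fixed coherently, as you rightly flag. You also rely on the general Liouvillian-extension theorem rather than on explicit group elements.

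The paper's version stays abstract inside $K$ and makes the obstruction concrete. It shows the Galois orbit of $\mathcal{L}$ is infinite, with $\rho_1\neq0$ as the visible reason. Along the way it also records that only $k=2$ could survive the torus test.
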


\begin{proof}
Suppose $y_0=Fa$; multiplying by the algebraic twist unit $v$ gives
$y_0v=F\tilde a$ with $\tilde a=av$ still algebraic and nonzero. Work in the Picard--Vessiot
field $K$ of $w''+Q_Vw=0$ over $\mathbb{C}(t)$: by step [I], $W$ is a constant multiple of
$w_1^2$, so $\mathcal{L}=W'/W=2w_1'/w_1\in K$ and the closed form \eqref{eq:y0closed} reads
$y_0v=W(\rho_0+\rho_1\mathcal{L})$. Since $t$ is $\Gamma_0(30)^+$-invariant,
$W\propto\rmd t/\rmd\tau$ is a weight-two meromorphic form with \emph{trivial} multiplier on
all of $\Gamma_0(30)^+$; both $F$ and $W$ have meromorphic $q$-expansions at every cusp of the
commensurability class. Pick a positive integer $n$ such that $nk\in2\mathbb{Z}$---making
$(c\tau+d)^{nk}$ an honest integer-weight automorphy factor, free of the half-integer branch
ambiguity, so that the $n$-th power of the multiplier system is a genuine finite-order
character---and, enlarging $n$ by the order of that character, such that it is trivial. Then
$\Psi=F^n/W^{nk/2}$ is a weight-zero modular function with trivial
multiplier on the finite-index subgroup $\Gamma'\cap\Gamma_0(30)^+$, meromorphic at the cusps,
hence a rational function on the compact curve $X(\Gamma'\cap\Gamma_0(30)^+)$---a finite cover
of $X(\Gamma_0(30)^+)$---and therefore algebraic over
$\mathbb{C}(X(\Gamma_0(30)^+))=\mathbb{C}(t)$. Raising $y_0v=F\tilde a$ to the $n$-th power,
inserting $F^n=\Psi\,W^{nk/2}$, and dividing by $W^n$ gives
\begin{equation}
(\rho_0+\rho_1\mathcal{L})^n=\mathcal{A}\,W^{\,n(k-2)/2},\qquad
\mathcal{A}=\Psi\,\tilde a^{\,n}\neq0.
\label{eq:excl}
\end{equation}
Every member of \eqref{eq:excl} lies in $K$, so $\mathcal{A}\in K$ is algebraic over
$\mathbb{C}(t)$; because $\mathcal{G}=\mathrm{SL}(2,\mathbb{C})$ is \emph{connected}, the
algebraic closure of $\mathbb{C}(t)$ in $K$ is $\mathbb{C}(t)$ itself~\cite{vdPutSinger}, so
$\mathcal{A}\in\mathbb{C}(t)$ is fixed by the Galois action. Now act with explicit elements
of $\mathcal{G}$ in the basis $(w_1,\hat w)$, where $\hat w$ is a second, independent solution of $w''+Q_Vw=0$. The torus elements
$\sigma_\lambda:(w_1,\hat w)\mapsto(\lambda w_1,\lambda^{-1}\hat w)$ fix $\mathcal{L}$ and scale
$W\mapsto\lambda^2W$, so applying $\sigma_\lambda$ to \eqref{eq:excl} forces
$\lambda^{n(k-2)}=1$ for every $\lambda\in\mathbb{C}^\times$: hence $k=2$. (In particular
$k=0$ is impossible: $y_0v$ is not itself algebraic over $\mathbb{C}(t)$.) For $k=2$,
\eqref{eq:excl} reads $(\rho_0+\rho_1\mathcal{L})^n=\mathcal{A}\in\mathbb{C}(t)$, and the
unipotent elements $\sigma_\mu:(w_1,\hat w)\mapsto(w_1+\mu \hat w,\,\hat w)$ give
$(\rho_0+\rho_1\sigma_\mu(\mathcal{L}))^n=\mathcal{A}$ with
$\sigma_\mu(\mathcal{L})=2(w_1+\mu \hat w)'/(w_1+\mu \hat w)$. These are pairwise distinct: equal
logarithmic derivatives would force the ratio of $w_1+\mu \hat w$ and $w_1+\mu'\hat w$ to be a
constant of $K$ (the constants of $K$ are $\mathbb{C}$), impossible for $\mu\neq\mu'$ since
$w_1,\hat w$ are linearly independent over $\mathbb{C}$. Since $\rho_1\neq0$, the field elements
$\rho_0+\rho_1\sigma_\mu(\mathcal{L})$, $\mu\in\mathbb{C}$, are infinitely many distinct roots
in $K$ of the polynomial $X^n-\mathcal{A}$---a contradiction. Hence no such $F$, $a$ exist.
\end{proof}

The quasimodular depth of $y_0$ is therefore intrinsic---no algebraic factor, at any weight or
level, can absorb the $W'$ term, and $\rho_1\neq0$ is exactly the obstruction. This sharpens
consequence (ii) of Section~\ref{sec:modular} to a categorical exclusion. Both
computational inputs (the nonvanishing \eqref{eq:Einj}, and, as corroboration, a certified
scan excluding any polynomial relation $P(t,\,tW'/W)=0$ through bidegree $(28,12)$) are
certified in \texttt{numerics/certify\_y0\_lemma.py}
(\texttt{CERTIFICATE\_y0\_lemma.txt}).
\emph{(iii) Physical and arithmetic fingerprints.} The $-\tfrac12$ band-edge exponent above is
the square-root van Hove singularity of the three-dimensional density of states. The
half-integer residues $\tfrac12$ of $V_2$'s first-order coefficient $a_1$ at
$t=\tfrac14,\tfrac15,\tfrac19$ are the twist multipliers that defeat every trivial-character
ansatz, and the quadratic $15t^2+17t-8$ governing $\rho_1$ is the one already visible in the
rational solution $R$ of \eqref{eq:R}. The full
symbolic-plus-series certification is provided by the script \texttt{numerics/certify\_y0.py}
(\texttt{CERTIFICATE\_y0.txt}).

\section{The generating function itself: an exact obstruction}
\label{sec:phiobs}

The closed form \eqref{eq:y0closed} concerns the derivative $y_0=\Phi'/2$; the resolvent sits one
integration higher. Since the dispersive part of $G$ is odd in $\zeta=z-1$ with even central moments
$\nu_m$, its generating function is exactly
\begin{equation}
\zeta\,G_{\rm disp}(z)=\Phi(t),\qquad t=\zeta^{-2},
\label{eq:zetaG}
\end{equation}
term by term at large $|\zeta|$. It is natural to ask whether $\Phi$ itself---hence the Green's
function, the flat-band pole being elementary---admits a closed form in the same modular data. It
does not, in the following precise sense, and the entire obstruction is one explicit rational
function. Integrating \eqref{eq:y0closed} once by parts,
\begin{equation}
\Phi(t)=\frac{2\rho_1(t)\,W}{v}+2\!\int_0^{t}\!\Delta(s)\,\frac{W}{v}\,\rmd s+\frac{2}{15},
\label{eq:phiid}
\end{equation}
\begin{equation}
\Delta:=\rho_0-\Big(\rho_1'-\rho_1\,\frac{v'}{v}\Big)
=\frac{13\,t-1}{30\,t\,(t-1)^2}\neq0.
\label{eq:Delta}
\end{equation}
The identity \eqref{eq:phiid} is proved by differentiation---it reduces to \eqref{eq:y0closed} as
exact rational-function algebra---with the constant fixed at $t=0$, where $\Phi(0)=\nu_0=\tfrac23$
and $2\rho_1W/v\to\tfrac8{15}$; the integrand is regular there ($\Delta\,W/v\to-\tfrac1{30}$). Note
the support of the obstruction: $\Delta$ has poles only at the cusp image $t=0$ and the principal
van Hove point $t=1$---the twist locus $t=\tfrac14,\tfrac15,\tfrac19$ cancels out. If $\Delta$
vanished, \eqref{eq:phiid} would place $\Phi$ in the same quasimodular module as $y_0$. It does not,
and the failure is categorical, not an artifact of this particular rearrangement:

\begin{thm}[$\Phi$ lies outside the quasimodular module of $y_0$]\label{thm:phi}
\begin{equation}
\Phi\;\notin\;\mathbb{Q}(t)\;+\;v^{-1}\big(\mathbb{Q}(t)\,W+\mathbb{Q}(t)\,W'+\mathbb{Q}(t)\,W''\big).
\label{eq:phithm}
\end{equation}
\end{thm}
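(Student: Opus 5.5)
Suppose, for contradiction, that $\Phi=g+v^{-1}(a_0W+a_1W'+a_2W'')$ with $g,a_i\in\mathbb{Q}(t)$. The plan has three moves: strip the $W''$ term, differentiate and compare with \eqref{eq:y0closed} in the Picard--Vessiot field, and then use Lemma~\ref{lem:sl2} to force a first-order equation with rational coefficients whose solvability can be decided exactly.

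\textbf{Step 1: reduce to $W,W'$.} Since $W$ solves $\widetilde N$ (step [II]), $W'''=-4Q_VW'-2Q_V'W$. Only $W''$ remains to be expressed. Work in the Picard--Vessiot field $K$ of $w''+Q_Vw=0$, and use the fact that $W\propto w_1^2$ (step [I]). Then $W''=2(w_1')^2+2w_1w_1''$, so
\[
W''=\frac{(W')^2}{2W}-2Q_VW .
\]
Set $\mathcal{L}=W'/W$, as in the proof of Theorem~\ref{thm:noalg}, and note $\mathcal{L}'=-\tfrac12\mathcal{L}^2-2Q_V$. The assumed identity becomes
\[
\Phi=g+\frac{W}{v}\big(b_0+b_1\mathcal{L}+b_2\mathcal{L}^2\big),\qquad b_i\in\mathbb{Q}(t),\ b_2=\tfrac12a_2 .
\]

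\textbf{Step 2: differentiate and compare with $y_0$.} Differentiating, using $\Phi'=2y_0$ and \eqref{eq:y0closed}, gives $2W(\rho_0+\rho_1\mathcal{L})/v$ on the left. On the right, $W'/W=\mathcal{L}$ and $v'/v\in\mathbb{Q}(t)$, and the Riccati relation for $\mathcal{L}'$ makes the derivative of $\frac{W}{v}\,p(\mathcal{L})$ again of the form $\frac{W}{v}\,\tilde p(\mathcal{L})$. Here $\tilde p$ is a polynomial in $\mathcal{L}$ of degree at most $3$ with coefficients in $\mathbb{Q}(t)$. The result is
\[
g'=\frac{W}{v}\,\big(\text{cubic in }\mathcal{L}\text{ over }\mathbb{Q}(t)\big).
\]
I then invoke Lemma~\ref{lem:sl2}, exactly as in the proof of Theorem~\ref{thm:noalg}. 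The torus elements $\sigma_\lambda$ fix $\mathcal{L}$, $g'$ and $v$, and send $W\mapsto\lambda^2W$. So both sides must vanish separately: $g'=0$, and the cubic vanishes identically. The cubic must vanish coefficientwise because $\mathcal{L}$ is transcendental over $\mathbb{C}(t)$: the unipotent elements $\sigma_\mu$ move $\mathcal{L}$ through infinitely many values. Hence $g$ is a constant, and the $\mathcal{L}^3,\mathcal{L}^2,\mathcal{L}^1,\mathcal{L}^0$ coefficients give a triangular system. The top coefficients force $b_2=0$ (the $\mathcal{L}^3$ coefficient is $(\tfrac12-1)b_2$ up to a nonzero factor). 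The $\mathcal{L}^2$ coefficient then forces $b_1=2\rho_1$, which is consistent with the $2\rho_1W/v$ term in \eqref{eq:phiid}. The $\mathcal{L}^1$ and $\mathcal{L}^0$ coefficients leave a coupled system for the single unknown $b_0$.

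\textbf{Step 3: the rational obstruction (the main obstacle).} After Step 2, the conditions reduce to a first-order inhomogeneous linear ODE for $b_0\in\mathbb{Q}(t)$, whose inhomogeneity is built from $\Delta$ of \eqref{eq:Delta}. The $\mathcal{L}^1$ equation is expected to read
\[
b_0=2\Delta+(\text{terms in }b_1),
\]
and substituting this into the $\mathcal{L}^0$ equation should yield either a contradiction directly or an equation of the form $\Delta=D(\text{rational})$ for an explicit first-order operator $D$ involving $v'/v$ and $Q_V$. To rule out the latter, I would do a local analysis at the double pole $t=1$ of $\Delta$, or at $t=0$, comparing pole orders and residues. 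The candidate $b_0$ can only have poles at singular points, where local exponents bound their orders. This reduces existence to a finite linear system over $\mathbb{Q}$, which should have no solution, with the residue of $\Delta$ at $t=1$ as the expected witness. This step is the delicate one: it is where the nonvanishing of $\Delta$ must be shown to be a genuine obstruction rather than removable by a rational gauge. If the local argument is inconclusive, I would fall back on the exact rational-solution search used in the second proof of Proposition~\ref{prop:unique}, with denominator bounds from the Riemann scheme.
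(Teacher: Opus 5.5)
There is a genuine gap, and it sits in Step~2. The $\mathcal{L}^3$ coefficient of the derivative is identically zero; it is not a nonzero multiple of $b_2$. The factor $(W/v)'=(W/v)(\mathcal{L}-v'/v)$ contributes $+b_2\mathcal{L}^3$, while $2b_2\mathcal{L}\mathcal{L}'=2b_2\mathcal{L}(-\tfrac12\mathcal{L}^2-2Q_V)$ contributes $-b_2\mathcal{L}^3$. This cancellation is structural. Because $\widetilde N(W)=0$, the space $\mathbb{Q}(t)W+\mathbb{Q}(t)W'+\mathbb{Q}(t)W''$ is closed under $\rmd/\rmd t$, and so is its $v^{-1}$-twist. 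Since $W\mathcal{L}^2=2W''+4Q_VW$, the expression $(W/v)\,p(\mathcal{L})$ with $\deg p\le2$ differentiates to an expression of the same form. So nothing forces $b_2=0$.

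Matching correctly against $2(\rho_0+\rho_1\mathcal{L})$, with $d_v=\rmd/\rmd t-v'/v$, gives:
\begin{itemize}
\item from $\mathcal{L}^2$: $b_1=-2d_vb_2$;
\item from $\mathcal{L}^1$: $b_0=2\rho_1+2d_v^2b_2+4Q_Vb_2$;
\item from $\mathcal{L}^0$: the single condition $\widetilde N_v(2b_2)=2\Delta$.
\end{itemize}
The last is exactly \eqref{eq:S2eq} with $S_2=a_2=2b_2$. Your intermediate $b_1=2\rho_1$ is also misattributed: the $2\rho_1W/v$ term of \eqref{eq:phiid} is a $b_0$-term. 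In fact, had $b_2=0$ really been forced, your own equations would give $b_1=0$, $b_0=2\rho_1$ and then $\Delta=0$ immediately. So the chain leading to ``a first-order ODE for $b_0$'' is not consistent.

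The residual problem is therefore a third-order inhomogeneous equation for the $W''$-coefficient. The real content of the theorem is that $\widetilde N_v(S_2)=2\Delta$ has no solution in $\mathbb{Q}(t)$, and your Step~3 never reaches this. Its proposed witness, the residue of $\Delta$ at $t=1$, cannot decide it on its own: $\Delta\neq0$ is not by itself an obstruction, since a third-order operator might map some rational $S_2$ onto it. What is needed is the paper's step (c):
\begin{itemize}
\item a rational $S_2$ can have poles only at $0,\tfrac19,\tfrac15,\tfrac14,1$;
\item at each of these points the indicial polynomial of $\widetilde N_v$ has no negative-integer root, so a pole of order $k\ge1$ would produce a pole of order $k+3$, exceeding the pole orders of $\Delta$ (simple at $0$, double at $1$), and hence no pole is possible;
\item the exponents at $\infty$ cap $\deg S_2\le3$;
\item the resulting four-unknown linear system is checked to be inconsistent over $\mathbb{Q}$.
\end{itemize}

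Your Step~1 reduction is correct. Your use of the transcendence of $\mathcal{L}$, together with the torus $\sigma_\lambda$, to split off $g'$ is a valid substitute for the paper's independence of $W,W',W''$ over $\mathbb{Q}(t)$, with one repair. By connectedness of $\mathcal{G}$, $\mathbb{C}(t)$ is algebraically closed in $K$, so $v\notin K$. You should therefore multiply the differentiated identity by $v$ before applying $\sigma_\lambda$, as in the proof of Theorem~\ref{thm:noalg}.
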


The module on the right is the $v^{-1}$-twist of the full $\rmd/\rmd t$-closure of $W$ over
$\mathbb{Q}(t)$ (rank three, by $\widetilde N(W)=0$), extended by rational functions; it contains
$y_0$ by \eqref{eq:y0closed}. The derivative of the generating function is quasimodular; the
generating function itself is not.

\begin{proof}
Three exact steps, certified in
\texttt{numerics/\allowbreak certify\_\allowbreak phi\_\allowbreak obstruction.py}
(\texttt{CERTIFICATE\_\allowbreak phi\_\allowbreak obstruction.txt}). (a) For any nonzero $P=S_0+S_1\,\rmd/\rmd t+S_2\,
\rmd^2/\rmd t^2$ over $\mathbb{Q}(t)$, the value $P(W)$ is neither zero nor algebraic over
$\mathbb{C}(t)$. Indeed $P(W)=0$ would make $\mathrm{GCRD}(P,\widetilde N)$ a nonconstant right
factor of order ${\le}2$ of $\widetilde N$, which is irreducible: its solution space is the
symmetric square of the standard representation of $\mathcal{G}=\mathrm{SL}(2,\mathbb{C})$
(Lemma~\ref{lem:sl2}). And an algebraic value is rational, by connectedness of
$\mathcal{G}$~\cite{vdPutSinger}; then $P$ is constant on the Galois orbit of $W$---which spans
$\mathrm{Sol}(\widetilde N)$, by irreducibility---so $P$ kills the ${\ge}2$-dimensional span of
orbit differences, again a proper right factor. In particular $W,W',W''$ are independent over
$\mathbb{Q}(t)$, and $v^{-1}\mathrm{span}\{W,W',W''\}$ meets $\mathbb{C}(t)$ only in $0$ (an
equality $\xi/v=r$ would make $\xi=rv$ algebraic). (b) Suppose
$\Phi=C_0+(S_0W+S_1W'+S_2W'')/v$ with rational $C_0,S_i$. Differentiating, eliminating $W'''$ by
$\widetilde N(W)=0$, and comparing with $\Phi'=2y_0$ coefficient-wise in the independent triple
$W,W',W''$ forces $C_0'=0$ (by the last sentence of (a)), $S_1=-\rmd_vS_2$ and
$S_0=2\rho_1+\rmd_v^{\,2}S_2+4Q_VS_2$ with $\rmd_v:=\rmd/\rmd t-v'/v$, and leaves the single
condition
\begin{equation}
\widetilde N_v(S_2)=2\Delta,\qquad
\widetilde N_v:=\rmd_v^{\,3}+4Q_V\,\rmd_v+2Q_V',
\label{eq:S2eq}
\end{equation}
verified symbolically for generic $S_2$. (c) Equation \eqref{eq:S2eq} has no rational solution: a
rational $S_2$ could have poles only at $t=0,\tfrac19,\tfrac15,\tfrac14,1$; at each of these points
the indicial polynomial of $\widetilde N_v$ has no negative-integer root and order matching forbids
any pole at all, while the analysis at $t=\infty$ caps the degree at three. The resulting
four-dimensional linear system is inconsistent, exactly over $\mathbb{Q}$.
\end{proof}

Physically, by \eqref{eq:zetaG},
\begin{equation}
\frac{\rmd}{\rmd z}\big[\zeta\,G_{\rm disp}(z)\big]=-\frac{4\,y_0(t)}{\zeta^{3}},\qquad
t=\zeta^{-2}:
\label{eq:dzG}
\end{equation}
the object that the closed form \eqref{eq:y0closed} evaluates exactly is this derivative combination,
while $\zeta\,G_{\rm disp}$ itself lies provably outside the module. The remaining transcendental
in \eqref{eq:phiid} is the Eichler-type integral of the weight-four, determinant-character form
$\Delta\,W^2/v$ (via $\rmd t=W\,\rmd q/q$); its arithmetic nature---its Eisenstein and cuspidal
components on the twist cover of \eqref{eq:twist}---we leave open.

\section{The Watson reduction: a generalized body-centered-cubic integral}
\label{sec:watson}
The orthogonal structure of Section~\ref{sec:orthogonal} has a concrete origin, which can be exhibited directly.
Writing $\mathrm{CT}_\theta$ for the constant term (torus average) over $\theta=(\theta_1,\theta_2,\theta_3)\in
[0,2\pi)^3$, and eliminating the band index of the dispersive bands from the Bloch matrix of the premedial
net $\Gamma$ of Section~\ref{sec:lattice} (the $\mathrm{K}_4$, or srs, net) by a
resultant, the moment generating function collapses to a single three-dimensional torus integral,
\begin{equation}
\begin{split}
\Phi(t)&=\frac23-\frac t3\,\frac{\rmd}{\rmd t}\,\mathrm{CT}_\theta\,\log D_+(t,\theta),\\[2pt]
D_+&=(1-6t+3t^2)-2t^2\!\sum_{i}\cos2\theta_i-8\,t^{3/2}\!\prod_i\cos\theta_i .
\end{split}
\label{eq:Dplus}
\end{equation}
The right-hand side of \eqref{eq:Dplus} is exactly the structure function of the \emph{body-centered-cubic lattice
with first- and second-neighbor hopping}: $8\prod_i\cos\theta_i$ is the bcc nearest-neighbor structure function
(weight $t^{3/2}$), $2\sum_i\cos2\theta_i$ is the simple-cubic second-neighbor shell (weight $t^2$), and the
spectral parameter is $w(t)=1-6t+3t^2$. The hyperkagome LGF is thus expressible through the classical
two-parameter ``generalized Watson integral''
$I(w;a,b)=\mathrm{CT}_\theta\,[\,w-a\,8\!\prod_i\cos\theta_i-b\,2\!\sum_i\cos2\theta_i\,]^{-1}$ and its parameter
derivatives, evaluated along the algebraic curve $(w,a,b)=(1-6t+3t^2,\,t^{3/2},\,t^2)$.

Every step is verified in exact arithmetic: the difference-of-squares identity
$\det(I-tS^2)=(1-6t+(3-\Lambda)t^2)^2-t^3\Xi^2$ for the Bloch matrix $S$; the $4{:}1$ measure-preserving torus
covering $\kappa=B\theta$ ($\det B=4$) under which the two invariants become $\Lambda\mapsto2\sum_i\cos2\theta_i$ and
$\Xi\mapsto8\prod_i\cos\theta_i$; the cancellation of all half-integer powers of $t$ in
$\mathrm{CT}_\theta\log D_+$ (so $\Phi$ is an honest power series in $t$); and the reproduction of the exact moments
$\nu_m$ through $m=8$. Two internal consistencies confirm the picture. The half-integer coupling $t^{3/2}$ is
single-valued only on the $\sqrt t$ double cover---precisely the determinant-character curve
$v^2=(1-4t)(1-5t)(1-9t)$ of Section~\ref{sec:orthogonal}. And the singular set $\{\tfrac19,\tfrac15,\tfrac14,1\}$ is the
set of critical values of the bcc$(1,2)$ band map along the curve, independently reproducing the Riemann
scheme~\eqref{eq:riemann}.

This realization anchors the modular closed form of Section~\ref{sec:modular} on the lattice side: $I(w;a,b)$ is
the cubic lattice Green's function with second-neighbor interactions, whose natural home is the classical
second-neighbor literature (Morita and Horiguchi~\cite{MoritaHoriguchi1971}; the extended Watson integrals of
Glasser and Zucker~\cite{GlasserZucker1977}; the cubic modular transformations of Joyce~\cite{Joyce1998})---while the modular structure itself is \emph{proven} in
Section~\ref{sec:modular}, independently of any particular literature evaluation. The remaining task is to specialize such an evaluation to the curve above, or to integrate
$D_+$ termwise---it is quadratic in each $\cos\theta_i$, so one angular integration yields an inverse square root
and the next an elliptic integral, leaving a one-dimensional algebraic period that the orthogonal structure forces
into the $\mathrm{Sym}^2(V_2)$ form. We record the reduction here; matching such a classical evaluation, term by
term, against the modular parametrization of Section~\ref{sec:modular} and the closed form \eqref{eq:y0closed} is
left to future work.%
\footnote{This reduction is independent of the differential-Galois certificate; it is verified in exact
arithmetic by \texttt{numerics/verify\_watson\_reduction.py} in the repository.}

\section{Exact results}
Independently of the closed-form question, the operator yields exact data. At the symmetric center $E=1$ the LGF
takes the exact value
\begin{equation}
\mathrm{Re}\,G(1)=\tfrac19,
\end{equation}
the flat-band contribution $\tfrac13/(1+2)=\tfrac19$ plus a dispersive principal value that vanishes by the
$E=1$ symmetry (the value is reproduced numerically by \texttt{numerics/verify\_specialvalues.py}). (The value at $E=2$ is a singular boundary value: $E=2$ is the van Hove point $t=1$ of
\eqref{eq:riemann}. The exponents there are $\{-1,-\tfrac12,\tfrac12\}$, and the integer-separated pair
$\pm\tfrac12$ permits a logarithm; direct Lorentzian broadening of the Brillouin-zone spectrum
(\texttt{numerics/verify\_vanhove\_log.py}) is consistent with a logarithmically divergent
$\mathrm{Im}\,G$, and, as the $E=1$ symmetry requires, gives the same divergence at $E=0$, which is the same
point $t=1$.) From the exponent $-\tfrac12$ at
$t=\tfrac19$ the band-edge density of states behaves as $\rho(E)\sim\sqrt{E_{\rm edge}-E}$, and the flat band
contributes the delta function $\tfrac13\,\delta(E+2)$.

\section{Conclusion}
We have given the exact differential equation that controls the hyperkagome lattice Green's function,
determined its differential-Galois structure, proved its modularity, and obtained its weight-two period in
explicit closed form. The Green's function is non-Liouvillian---it has no closed form in algebraic or
elementary functions---and yet it is modular. The result is also delimited exactly: the derivative
$\rmd[\zeta G_{\rm disp}]/\rmd z$ is quasimodular through the closed form \eqref{eq:y0closed}, while the
generating function $\zeta G_{\rm disp}$ itself provably is not, the entire obstruction being the rational
function $\Delta$ of \eqref{eq:Delta} (Section~\ref{sec:phiobs}). Constructively, $M$ is an irreducible
third-order Picard--Fuchs operator that is not a \emph{literal} symmetric square, yet whose symmetric square
carries a rational solution; its Galois group is therefore the full orthogonal group
$\mathrm{O}(3,\mathbb{C})$, with identity component $\mathrm{SO}(3,\mathbb{C})$, and $M$ is
projectively---through an order-two differential intertwiner---the symmetric square of a second-order operator
$V_2$. The central result of this paper is that the hyperkagome Green's function is governed by the
uniformizing equation of the genus-zero modular curve $X(\Gamma_0(30)^+)$: $t$ is given by the
eta-quotient parametrization \eqref{eq:hauptt}, the exact Schwarzian identity \eqref{eq:schwarz}
holds, and the projective monodromy of the hyperkagome operator is the arithmetic lattice
$\Gamma_0(30)^+$. That uniformizing equation is not itself new---it is the row $\Gamma_0(30)^+$ of
the table of genus-zero groups in~\cite{LianYau1995}, and its Schwarzian potential is the tabulated
Conway--Norton class $30\mathrm{B}$ of~\cite{LianWiczer2006}; what is new is the identification of a
lattice Green's function with it, together with the a priori pole-degree bound that makes
\eqref{eq:schwarz} a proof rather than a numerical match, and thereby proves both tabulated
entries. This settles the integral left open by Varma and Monien: the reason a naive elliptic
reduction was never found is not that the object is genuinely third-order, but that the reduction is hidden
behind an operator homomorphism, invisible to the local-exponent tests that detect only function-multiplier
and pullback equivalences---and the closed form, once reached, is modular at level $30$ rather than elliptic
at the classical levels. The Watson reduction of Section~\ref{sec:watson} anchors the same structure on the
lattice side: the LGF is exactly the classical generalized body-centered-cubic integral with first- and
second-neighbor hopping, so the classical second-neighbor cubic-lattice literature and the
conic/\texttt{reduceorder} program of Assis, van Hoeij and Maillard~\cite{AVM2016} meet the modular
parametrization in a single object. The hyperkagome LGF is, to our knowledge, the first three-dimensional
lattice Green's function shown to require such an operator equivalence rather than a literal symmetric-square
factorization, and the first realized at a modular level with three distinct prime factors. The weight-two
period $y_0$ itself is exhibited explicitly in Section~\ref{sec:y0closed} as a depth-one quasimodular form on
$\Gamma_0(30)^+$ twisted by the determinant character, and is provably not a modular form times an algebraic
function at any weight (Theorem~\ref{thm:noalg})---in particular, not an eta quotient.

We state the evidential status of these results plainly. Every structural statement above is a rigorous
consequence of $M$ together with Theorems~\ref{thm:modular}, \ref{thm:y0}, \ref{thm:noalg}
and~\ref{thm:phi}; that $M$ is the \emph{minimal} annihilator of the Green's function is established to the
guess-and-verify standard of the lattice-statistics literature, and an unconditional creative-telescoping
proof of minimality remains the one open step.

The result is directly relevant to the modeling of Na\textsubscript{4}Ir\textsubscript{3}O\textsubscript{8} and related hyperkagome magnets, where the
single-particle LGF enters impurity, disorder, and dynamical-mean-field treatments; it provides the exact
band-edge and special-point data and a certified equation for high-precision evaluation. More broadly, it adds an
explicit three-dimensional frustrated lattice, with a nontrivial orthogonal Galois group, a level-30 modular
parametrization, and an explicit depth-one quasimodular closed form for its weight-two period, to the catalog
of Picard--Fuchs operators arising in lattice statistics. An unconditional creative-telescoping proof that $M$
is the minimal annihilator and the analogous Heisenberg/transport Green's functions are natural next steps.

All lattice constructions, exact moments, the certified operator, and the verification scripts are provided in
the public repository accompanying this paper,
\url{https://github.com/bryannasr4-gif/hyperkagome-lgf}.

\section*{Acknowledgments}
We thank W.~Zudilin for the observation that an order-three
Calabi--Yau operator would have to be a symmetric square, which helped us discard the Calabi--Yau framing.
We are indebted to M.~van~Hoeij for the explicit radical expression for the pullback $1728/j$, computed from
the explicit $V_2$ (private communication, July 2026) and certified in Section~\ref{sec:modular}.
The operator determination and all certifications were carried out with computer algebra (SymPy); every reported
result is reproduced by the exact, publicly available scripts in the accompanying repository.

\end{document}